\documentclass[lettersize,journal]{IEEEtran}
\usepackage{amsmath,amsfonts}
\usepackage{algorithmic}
\usepackage[ruled, linesnumbered]{algorithm2e}
\usepackage{array}
\usepackage{textcomp}
\usepackage{stfloats}
\usepackage{url}
\usepackage{verbatim}
\usepackage{graphicx}
\usepackage{subcaption}
\usepackage{multirow}
\usepackage{cite}
\usepackage{amsthm}
\theoremstyle{remark}
\newtheorem{assumption}{Assumption}
\newtheorem{remark}{Remark}
\newtheorem{definition}{Definition}
\newtheorem{proposition}{Proposition}

\usepackage{amsmath}
\usepackage{booktabs}
\DeclareMathOperator*{\argmin}{arg\,min}
\usepackage{tikz}
\usetikzlibrary{arrows.meta, positioning, fit, backgrounds, calc}
\usetikzlibrary{shapes.geometric, positioning}

\begin{document}

\title{A General Set-Based Framework for Cognitive State Estimation: Theory and Application to Conditionally Automated Driving}

\author{\IEEEauthorblockN{Sibibalan Jeevanandam, Neera Jain}
\thanks{This material is based upon work supported by the National Science Foundation under Award No. 2145827. Any opinions, findings, and conclusions or recommendations expressed in this material are those of the authors and do not necessarily reflect the views of the National Science Foundation.}
\thanks{The authors are with the School of Mechanical Engineering, Purdue Unviersity, West Lafayette, IN 47906 USA (e-mail: sjeevana@purdue.edu; neerajain@purdue.edu) (\textit{Corresponding author: Neera Jain}).}
}



\maketitle

\begin{abstract} 
We present a set-based framework for estimating human cognitive states in human-automation interaction (HAI) contexts. Unlike probabilistic approaches dominant in the HAI literature, our framework treats process and measurement uncertainties as unknown but bounded, avoiding the need for large structured datasets or distributional assumptions on noise. We demonstrate the framework in the context of conditionally automated (SAE Level 3) driving, where we estimate three cognitive states that influence human reliance on the automation---trust, perceived risk, and workload---during a continuous, non-trial-based interaction. We leverage a hybrid dynamical modeling framework to identify individual-specific process and measurement models, systematically estimate noise bounds by enforcing reachset conformance, and identify the subset of cognitive states that influence each individual's reliance on the automation. Set-valued estimates of those states are then produced by fusing binary reliance observations and intermittent, quantized self-reports. The framework is evaluated through an in-person experiment in a medium-fidelity driving simulator with 20 participants. The set-valued estimator achieves at least 75\% consistency for most participants during testing, and multi-step-ahead reliance predictions derived from the estimates outperform both an open-loop baseline and a particle filter across all choices of prediction horizons (15, 30, 45, 60 time steps), with the performance gap more evident at longer horizons. The proposed estimation framework can enable automation systems that are continuously aware of, and responsive to, the human driver's state.
\end{abstract}

\begin{IEEEkeywords}
State estimation, human-machine interaction, cognition, autonomous driving, driver behavior, intent recognition.
\end{IEEEkeywords}

\section{Introduction}\label{sec:introduction}
Today, several commercially available automated vehicles (AVs) are equipped with conditional automation (SAE Level 3) which requires human supervision and intervention~\cite{shi_principles_2020}. The driver and the automation must therefore operate as a team, rendering automated driving a human-automation interaction (HAI) scenario. Safety is the primary objective during operation, yet it remains an open challenge. When the human driver bears full responsibility for safety, they tend to over-monitor the automation, sometimes experiencing a higher cognitive load than non-automated (manual) driving \cite{stapel_automated_2019}. Conversely, some drivers over-trust the automation and neglect the need to remain alert \cite{worle_misuse_2023}; in such cases, even safeguards in the vehicle, such as take-over requests, fail to increase safety. In both cases, forgoing automation altogether appears to be the safer option, defeating the purpose of AV development. However, both cases share a common assumption---that the automation operates without any knowledge of the driver's cognitive state. 

An alternative approach is to design adaptive automation to promote operational safety. This can be achieved by designing automation to respond to the human driver with personalized and timely interventions. For example, a driver whose trust exceeds the vehicle's actual reliability may require a longer take-over lead time. Conversely, the cognitive burden imposed by an opaque automation can be mitigated by selectively increasing system transparency \cite{akash_toward_2020}. Indeed, in several HAI contexts, closed-loop strategies that leverage the human's real-time cognitive state have led to measurable improvements in performance and safety \cite{saeidi_incorporating_2019,akash_human_2020,azevedo-sa_context-adaptive_2020,yuh_online_2024, wielatz_role_2025,chen_planning_2018, hu_dynamic_2024}. However, a necessary precursor to developing human-aware and adaptive AVs is a reliable algorithmic framework for real-time human cognitive state estimation.

\subsection{Literature Review}
Researchers have identified multiple cognitive factors that influence human behavior during automated driving. These include human trust in the automation \cite{sonoda_displaying_2017, korber_introduction_2018, stapel_-road_2022, he_modelling_2022}, their perception of risk\cite{stapel_-road_2022, he_modelling_2022}, and mental workload \cite{stapel_automated_2019}. 
\begin{table*}[t]
\centering
\caption{Recent literature on cognitive state estimation in human-AI interaction.}
\label{tab:lit_review}
\renewcommand{\arraystretch}{1.3}
\begin{tabular}{p{2cm} p{2.2cm} p{1.2cm} p{2.4cm} p{2.5cm} p{4cm} }
\hline
\textbf{Authors (Year)} & \textbf{Approach} & \textbf{State(s)} & \textbf{Representation} & \textbf{HAI Scenario} & \textbf{Observations / Inputs} \\
\hline
Xu \& Dudek (2015)~\cite{xu_optimo_2015} & Dynamic Bayesian Network  & Trust & Continuous-valued & Supervision of an aerial robot & Robot task performance, operator interventions, self-reports \\
Chen et al. (2018)~\cite{chen_planning_2018} & POMDP  & Trust & Discrete-valued & Teaming for table clearing & Human interventions, self-reports, performance\\
Wortelen et al.\ (2019)~\cite{wortelen_monte_2019} & Particle filter  & Workload & Continuous-valued & Partially automated driving & Perceivable driving-scene information and driver's actions\\
Akash et al. (2020)~\cite{akash_human_2020} & POMDP & Trust, Workload & Discrete-valued & Reconnaissance mission with automated decision aid & 
Human compliance, response time, scenario, automation transparency, performance\\
Azevedo-Sa et al. (2021)~\cite{azevedo-sa_real-time_2021} & Kalman Filter  & Trust & Continuous-valued & SAE Level 3 Automated Driving & Eye-tracking signals, system usage time, NDRT performance, event signals (true/false alarm, miss)
\\
Li et al. (2021)~\cite{li_kalman_2020} & Kalman filter  & Trust & Continuous-valued & Supervisory control of robotic swarms & Human response time, reliance, robots' task performance\\
Luo et al.\ (2023)~\cite{luo_real-time_2024} & Bayesian inference using ML classifiers & Workload & Discrete-valued & Tele-operation of a vehicle with a secondary visual search task & Gaze trajectory and pupil-size change \\
Williams et al. (2023)~\cite{williams_computational_2023} & POMDP  & Trust, Self-Confidence & Discrete-valued & Obstacle avoidance task with automation assistance & Human self-reports, reliance, performance\\
Caber et al.\ (2024)~\cite{caber_driver_2024} & Bayesian filter & Workload & Discrete-valued& On-road driving with in-vehicle system interactions & Driving-performance signals (steering, braking), road-type context\\
Hu et al.\ (2024)~\cite{hu_dynamic_2024} & Kalman filter  & Trust & Continuous-valued& SAE Level 2--3 automated driving & Driver's self-reported risk, attention rate, intervention rate, and NDRT completion rate \\
\hline
\textbf{This work} & Set-valued State Estimator & Trust, Risk, Workload & Set-Valued & SAE Level 3 automated-driving & Task complexity, driver self-reports, reliance \\
\hline
\end{tabular}
\end{table*}
Table~\ref{tab:lit_review} provides a summary of recent work on estimating these human cognitive states during HAI scenarios. Dynamic Bayesian Networks~\cite{xu_optimo_2015}, Partially Observable Markov Decision Processes (POMDPs)~\cite{chen_planning_2018,akash_human_2020,williams_computational_2023}, Kalman filters~\cite{azevedo-sa_real-time_2021,li_kalman_2020,hu_dynamic_2024}, particle filters~\cite{wortelen_monte_2019}, and other Bayesian estimators~\cite{caber_driver_2024,luo_real-time_2024} have been used to infer cognitive states such as trust, workload, and self-confidence. For example, Azevedo-Sa et al.~\cite{azevedo-sa_real-time_2021} use a Kalman filter to estimate driver trust during SAE Level 3 automated driving from eye-tracking signals, system usage, and event signals, modeling uncertainty as Gaussian process and measurement noise. Akash et al.~\cite{akash_human_2020} formulate a POMDP to jointly estimate trust and workload in a reconnaissance task, inferring discrete-valued cognitive states from human compliance, response time, and automation transparency cues. Wortelen et al.~\cite{wortelen_monte_2019} employ a particle filter to track continuous-valued driver workload during partially automated driving, using perceivable scene information and driver actions as observations. Caber et al.~\cite{caber_driver_2024} estimate discrete workload states during on-road driving using a Bayesian filter driven by driving-performance signals such as steering and braking, without relying on physiological or self-reported measurements.

Collectively, several observations emerge from Table~\ref{tab:lit_review}. First, these works establish probabilistic estimation as the dominant paradigm, requiring characterizations of process and measurement noise---typically as Gaussian~\cite{azevedo-sa_real-time_2021,li_kalman_2020,hu_dynamic_2024} or discrete conditional~\cite{xu_optimo_2015,chen_planning_2018,akash_human_2020,williams_computational_2023} distributions---whose parameters must be identified from large, structured datasets. Second, many approaches rely on frequent or periodic cognitive measurements collected over discrete interaction trials~\cite{xu_optimo_2015,chen_planning_2018,williams_computational_2023}. Automated driving violates both of these assumptions: sufficient data to identify individualized noise distributions is rarely available, and driver engagement with the automation is continuous rather than trial-based, so self-reports arrive intermittently rather than at regular intervals. Third, trust is the most commonly estimated cognitive state, appearing in nearly all prior work, with workload considered in a few studies~\cite{wortelen_monte_2019,akash_human_2020,luo_real-time_2024,caber_driver_2024}; risk perception---which studies have shown to influence reliance behavior during driving~\cite{stapel_-road_2022,he_modelling_2022, jeevanandam_hybrid_2025-1}---has not been directly estimated in any prior framework. These observations motivate a framework that (1) does not rely on a probabilistic noise model requiring large, structured datasets, (2) can operate with sparse, asynchronous cognitive measurements, and (3) can simultaneously estimate multiple cognitive states (e.g., trust, risk perception, and workload that influence the human user's behavior.

\subsection{Contribution}
We present a set-based framework for estimating human cognitive states in HAI contexts. A set-based characterization of model and measurement uncertainties avoids imposing probabilistic assumptions on the noise terms. Since human cognition can be influenced by numerous unmeasured factors, we believe bounding these factors---yielding a set-valued characterization of uncertainty---is more appropriate than a probabilistic one. Furthermore, set-based approaches have proven useful for state estimation when measurements are quantized\cite{sviestins_optimal_2000} or binary\cite{casini_set_2024}. Such measurements are common in HAI contexts; self-reports are typically solicited from the human driver on a scale with finite increments, yielding quantized measurements of the cognitive state. Moreover, reliance on the automation---whether or not the human uses the automation---constitutes a binary measurement of their cognition-related behavior. 

We demonstrate the framework in the context of conditionally automated driving for three cognitive states that influence human reliance on the automation during driving~\cite{jeevanandam_multi-factor_2026}---trust, perceived risk, and workload. We leverage our previously-developed hybrid dynamical modeling framework~\cite{jeevanandam_hybrid_2025} to identify individual-specific process and measurement models and assume the uncertainties to be unknown but bounded, systematically estimated by enforcing reachset conformance~\cite{gruber_scalable_2023}. For each individual, we identify the subset of cognitive states that influence their reliance and produce set-valued estimates of those states. This work represents the first set-based approach for human cognitive state estimation in a continuous, non-trial-based interaction with automation.

The remainder of this paper is organized as follows. We introduce some preliminaries and notation in Section~\ref{sec:prelim}. We present the process and measurement models and the underlying assumptions in Section~\ref{sec:model}. We describe the human user study used to illustrate the framework in Section~\ref{sec:human_study}. We elucidate model identification in Section~\ref{sec:sys_id} and describe the set-valued state estimator built on the identified model in Section~\ref{sec:state_estimator}. Finally we present and discuss the results from the human user study in Section~\ref{sec:results} and conclude the paper in Section~\ref{sec:conclusion}.
\section{Preliminaries and Notation}\label{sec:prelim}
Vectors are denoted by lowercase letters (e.g., $x$), unless specified otherwise. The $j$th standard basis vector in $\mathbb{R}^n$ is denoted by $e_j$, whose $j$th entry is $1$ and all other entries are $0$. For a vector $x\in\mathbb{R}^n$, we denote its $j$th coordinate using subscript $j$, such that $x_j := e_j^T x$. Similarly, for a matrix $A \in \mathbb {R}^{m\times n}$, $A_{ij}:=e_i^T Ae_j$. For any two vectors $x,y\in \mathbb R^n$, $|x|\leq y$ should be interpreted as $|x_j|\leq y_j \,\forall j=1,\cdots,n$. The $p$-norm of a vector $x\in \mathbb R^n$ is given by $\|x\|_p=\left(\sum^n_{j=1} |x_j|^p\right)^{1/p}$. Discrete time-indices are denoted by $k=0, \cdots, N$. 

Sets are denoted by uppercase calligraphic letters, e.g., $\mathcal{X}\subseteq \mathbb{R}^n$. A set raised to a positive exponent denotes the Cartesian product, such that $\mathcal{X}^m = \mathcal{X} \times \mathcal{X} \times \cdots \times \mathcal{X}$ ($m$ times). The linear image of $\mathcal{X}$ under $A\in\mathbb{R}^{m\times n}$ is $A\mathcal{X} := \{Ax : x\in\mathcal{X}\}\subseteq\mathbb{R}^m$. The Minkowski sum of any two sets $\mathcal{X},\mathcal{Y}\subseteq\mathbb{R}^n$ is $\mathcal{X}\oplus \mathcal{Y} = \{x+y :x\in\mathcal{X},\, y\in\mathcal{Y}\}$. The axis-aligned bounding box of $\mathcal X$, denoted by $\mathcal B(\mathcal X)$, is the smallest polytope with its faces parallel to the standard basis, that encloses all elements in $\mathcal X$. The axis-aligned volume of a set $\mathcal X$, denoted by $\text{vol}(\mathcal X)$, is defined as 
\begin{align*}
    \text{vol}(\mathcal X) = \prod_i^n \left(\rho(\mathcal B(\mathcal X),e_i)+\rho(\mathcal B(\mathcal X),-e_i)\right)\enspace,
\end{align*}
where $\rho(\mathcal Y,l)$ is the support function of a set $\mathcal{Y}\subseteq\mathbb{R}^n$ in direction $l\in\mathbb{R}^n$ such that
\(
    \rho({\mathcal{Y}},l) := \max \{\, l^\top y : y\in\mathcal{Y} \}.
\)

A vector, matrix, or set with subscript $\Pi$ represents its projection onto the subspace $\Pi \mathbb R^n$ defined using the transformation matrix $\Pi\in \mathbb R^{m\times n}$. For example, $A_\Pi=\Pi A$ and $\mathcal X_\Pi=\Pi\mathcal X$. 
Finally, set operations used in this work can result in non-convex polytopes; therefore, we use hybrid zonotopes~\cite{bird_hybrid_2023} to represent sets.

\section{Human Cognitive State Dynamics: Process and Measurement Models}\label{sec:model}
Consider the human cognitive state-space with $n$ states, each continuous-valued and taking values in the interval $[0,1]$. Accordingly, let $\mathcal S:=[0,1]^n$ denote the cognitive state-space with state vector $x \in \mathcal S$.

\subsection{Process Model}
The discrete-time evolution of the state vector $x$ can be described using a bounded affine difference equation \cite{jeevanandam_hybrid_2025}, such that
\begin{subequations}\label{eq:affine_dyn}
 \begin{align}
    x^+(k)&=Ax(k)+Bu(k)+ C + w(k)\enspace,\\
    x(k{+}1) &= \begin{cases}
        0 \text{, if }x^+(k)\leq0\\
        1 \text{, if }x^+(k)\geq1\\
        x^+(k) \text{, otherwise}\enspace.
    \end{cases}
\end{align}   
\end{subequations}
The model input $u \in \mathbb R^m$ can include both controllable factors such as the automation's availability and transparency, and uncontrollable but measurable factors such as task complexity (e.g., presence/absence of a construction zone). The system matrices for a specific individual are given by $A\in \mathbb R^{n\times n},B\in \mathbb R^{n\times m}, \text{and } C \in \mathbb R^{n}$. The process noise $w\in \mathbb R^n$ encapsulates the effects of unmodeled dynamics and unmeasured inputs.  
\begin{assumption}\label{assump:proc_noise}
    The process noise is unknown but bounded, such that $|w|\leq \epsilon_w$ for some $\epsilon_w \in \mathbb R^n$ ($\epsilon_w>0$).
\end{assumption} 
Accordingly, the set of all admissible values of $w$ is given by $\mathcal W:=\{w\in \mathbb R^n : |w|\leq \epsilon_w\}$.

\subsection{Measurement Models}
We consider two sensing modalities to infer the cognitive states---subjective (self-reported cognitive states) and behavioral (human reliance on the automation).

\subsubsection{Self-Reports}
The cognitive states can be measured intermittently through self-reports solicited from the human. Assuming each cognitive state can be reported on a scale of $0$ to $1$ in some fixed increment $\Delta\in (0,1)$, the self-reported cognitive state takes discrete values from the set $\mathcal Y:=\{0, \Delta, 2\Delta,\cdots,1\}$. Let $y\in \mathcal Y^n$ denote the output of the human's self-report of all three cognitive states at some time index $k\in \mathcal K_{SR}$. Here, $\mathcal K_{SR}$ denotes the set of time indices at which the human is solicited to self-report their cognitive state. Accordingly, $\forall  k\in \mathcal K_{SR}$, $y(k)$ is a measurement of $x(k)$, such that
\begin{subequations}\label{eq:sr_model}
\begin{align}
    s(k) &= x(k) + z(k)\enspace,\\
    y(k)&= [Q(s_1(k))\;\cdots\; Q(s_n(k))]^T \enspace.
\end{align}
\end{subequations}
 Here, $s\in \mathbb{R}^n$ is the latent, continuous-valued self-report that the human would provide if unconstrained by a discrete scale. The measurement noise $z\in \mathbb R^n$ captures effects of self-reporting bias \cite{rosenman_measuring_2011-1}, and $Q:\mathbb R \to \mathcal Y$ quantizes $s$ element-wise, such that for $i=1,\cdots,n$, 
\begin{align}\label{eq:quantizer}
    Q(s_i)=\begin{cases}
        0,\;\text{if } s_i< -\Delta/2\\
        1,\;\text{if } s_i\geq -\Delta/2\\
        \Delta\times \left\lfloor \frac{s_i}{\Delta} + \frac{1}{2} \right\rfloor,\; \text{otherwise}\enspace.
    \end{cases}
\end{align}
\begin{assumption}\label{assump:meas_noise}
    The measurement noise is unknown but bounded, such that $|z|\leq \epsilon_z$ for some $\epsilon_z \in \mathbb R^n$ ($\epsilon_z>0$).
\end{assumption}

Under Assumption~\ref{assump:meas_noise}, the set of all admissible values of $z$ is given by $\mathcal Z=\{z\in \mathbb R^n : |z|\leq \epsilon_z\}$.

\begin{definition}[Feasible State Set for Self-Reports]\label{def:FSS_SR}
    A feasible state set (FSS) for a self-report $y\in \mathcal Y$---denoted by $\mathcal X^{SR}_y$---is a set guaranteed to contain all states $x\in \mathcal S$ that can result in the observation $y$, such that $P\left(x(k)\in \mathcal{X}^{SR}_{y}\mid 
y(k)=y\right)=1$.
\end{definition}

Using Eqns.~\ref{eq:sr_model} and~\ref{eq:quantizer}, the feasible state set for a self-report $y$ is given by
\(
    \left(y\oplus \mathcal{Z}\oplus [-\Delta/2,\Delta/2)^n \right) \cap \mathcal S\enspace.
\)
For implementational convenience, we construct $\mathcal X^{SR}_{y}$ to be a closed set, such that
\begin{align*}
    \mathcal X^{SR}_y=\left(y\oplus \mathcal{Z}\oplus [-\Delta/2,\Delta/2]^n \right) \cap \mathcal S\enspace.
\end{align*}
Furthermore, without any loss of generality, we can combine the noise due to self-reporting bias and quantization, such that 
\begin{align*}
    \mathcal X^{SR}_y = \left(y\oplus \mathcal V\right)\cap \mathcal S\enspace,
\end{align*}
where $\mathcal V:=\{v\in \mathbb R^n:|v|\leq\epsilon_v =  \epsilon_z+\Delta/2\}$.
\vspace{0.5em}
\subsubsection{Human Reliance on the Automation}
At any time-index $0{\leq} k{\leq} N$, we can observe the human's behavior---their reliance on the automation---characterized by the binary variable $q(k)\in\{0,1\}$. Here, $q=1$ when the human relies on the automation, and $q=0$ when they are in manual control of the vehicle. 
The reliance behavior $q(k)$ is related to the cognitive state vector $x(k)$ through a function $h:\mathcal S\to \mathbb {R}$, such that the likelihood of the human relying on the automation is
\(
    p(q=1\mid x)=h(x)
\).
\begin{assumption}\label{assump:piecewise}
    The function $h$ is piecewise constant over $\mathcal S$, such that $h(x) = p_i$ when $x\in \mathcal S_i$ for some $i=1,\cdots,M$. Here, $\mathcal S_i \subseteq \mathcal S$ are sets that partition $\mathcal S$, such that $\bigcup_i^M \mathcal S_i=\mathcal S$, and $\mathcal S_i \cap \mathcal S_j = \emptyset$ for $i\neq j$.
\end{assumption}

\begin{definition}[Feasible State Set for Reliance]\label{def:FSS_reliance}
    A feasible state set for a reliance observation $q\in\{0,1\}$---denoted by $\mathcal X^R_q$---is a set guaranteed to contain all states $x\in \mathcal S$ that can result in the observed reliance $q$, such that $P\left(x(k)\in \mathcal{X}^R_{q}\mid 
q(k)=q\right)=1$.
\end{definition}


\section{Human User Study}\label{sec:human_study}
\begin{figure*}[t]
    \centering
    \begin{minipage}[t]{0.28\textwidth}
        \centering
        \includegraphics[width=\linewidth]{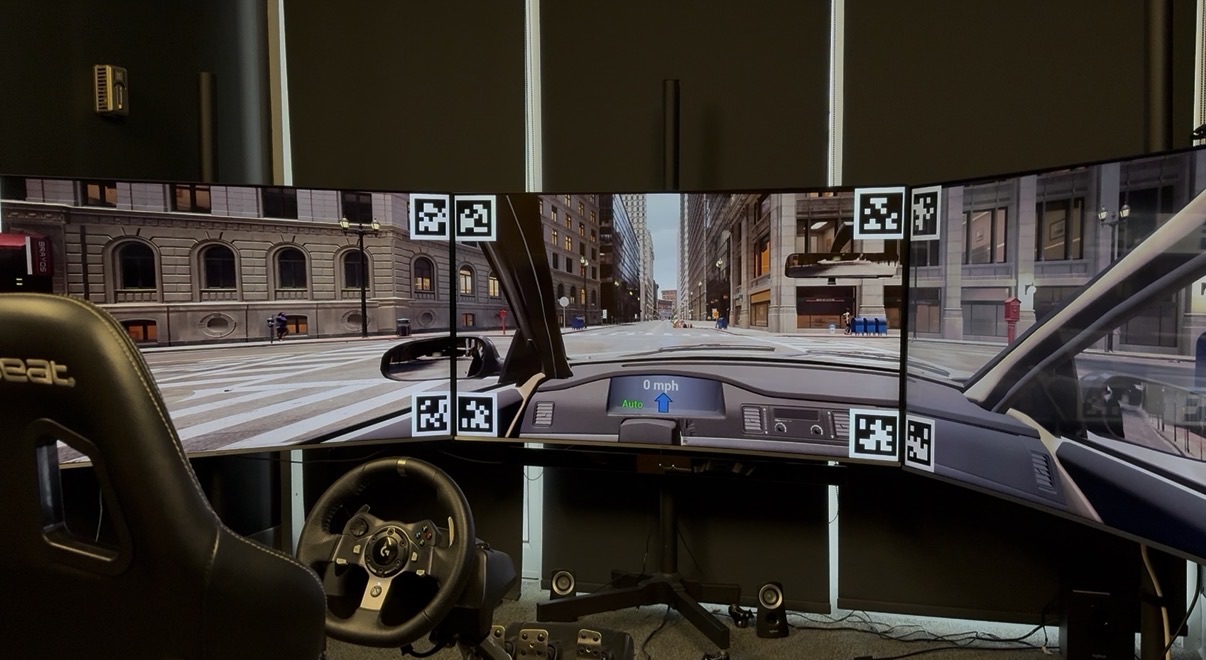}
        \caption{Driving simulator used in this study.}
        \label{fig:simulator}
    \end{minipage}
    \hfill
    \begin{minipage}[t]{0.68\textwidth}
        \centering
        \includegraphics[width=\linewidth]{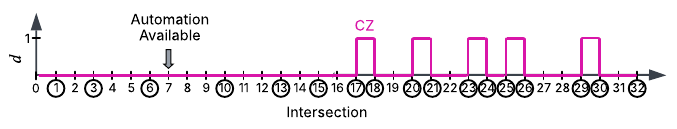}
        \caption{Binary signal representing task complexity during the main drive~\cite{jeevanandam_hybrid_2025-1}.}
        \label{fig:input_signal}
    \end{minipage}
\end{figure*}



\begin{figure}[t]
    \centering 
    \includegraphics[width=\linewidth]{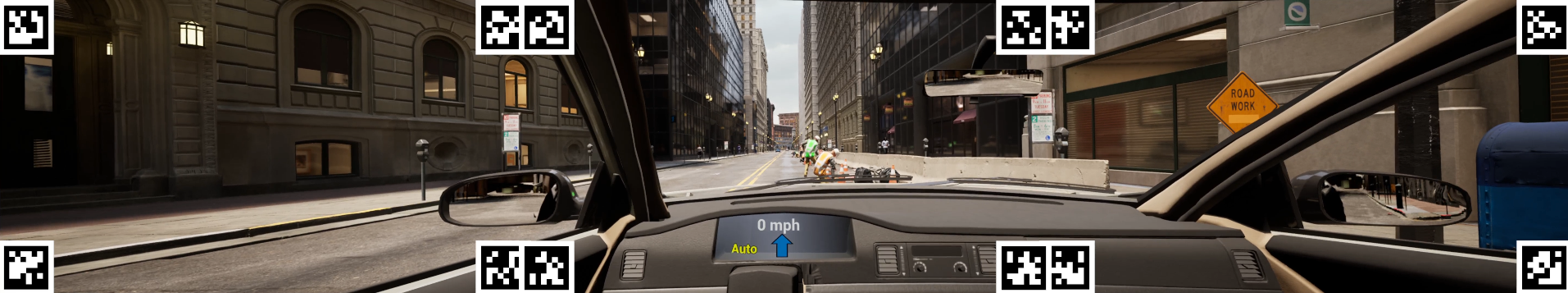}
    \caption{Participant's point-of-view when entering a construction zone (high task complexity scenario) in our driving simulator~\cite{jeevanandam_hybrid_2025-1}. }
  \label{fig:construction_pov}
\end{figure}

We designed an in-person experiment to elicit measurable changes in participants' cognitive states ($x$) and consequently their reliance on the automation ($q$)---see \cite{jeevanandam_hybrid_2025} for a detailed description of the experiment design and procedure. During the experiment, each participant navigates through a pre-defined route in a conditionally-automated vehicle with SAE Level 3 automation during a single, continuous drive in a custom-built medium fidelity driving simulator (Fig.~\ref{fig:simulator}). The input signal is defined as task complexity ($u:=d$) and is varied according to a binary sequence ($d\in\{0,1\}$) by placing construction zones along the route (see Fig.~\ref{fig:input_signal}). During low complexity ($d=0$), participants navigate through roads with low traffic, while during high complexity ($d=1$), participants are forced to merge into the left (opposite) lane and navigate around a construction zone (CZ) with human workers (Fig.~\ref{fig:construction_pov}). In this study, we solely manipulate task complexity; accordingly, the automation is capable of completing the entire route without errors (100\% reliable) albeit participants are not informed of this fact. They are instructed that once the automation is available to use, they can engage ($q=1$) or disengage ($q=0$) the automation at their will. Finally, at the circled intersections in Fig.~\ref{fig:input_signal}, participants are solicited to self-report (Table \ref{tab:self_reports}) their cognitive states when the ego vehicle stops at a stop sign or a red traffic light. In our prior work~\cite{jeevanandam_multi-factor_2026}, we found that three cognitive states---trust (T), perceived risk (R), and workload (W)---can influence human reliance on the automation during conditionally automated driving. Therefore, in this work, we restrict the cognitive state-space to be three-dimensional ($n=3$), such that $x=[T\; R\; W]^T$, and only measure these states. Participants respond on a scale ranging from 0 (labeled ``Very Low") to 100 (labeled ``Very High") in increments of 5. Without loss of generality, the self-reported cognitive states are normalized to range from $0$ to $1$, with the quantization interval $\Delta=0.05$.

\begin{table}[t]
    \centering
    \caption{Prompts for soliciting self-reported cognitive states~\cite{jeevanandam_hybrid_2025-1}.}
    \label{tab:self_reports}
    \begin{tabular}{l l}
        \toprule
        \textbf{State} & \textbf{Prompt} \\ \midrule
        $T$ & What is your current level of trust in the automation? \\
        $W$ & What is your current level of mental workload? \\
        $R$ & What is the current risk of an accident? \\ 
        \bottomrule
    \end{tabular}
    \vspace{-0.3cm}
\end{table}

\paragraph*{Participant Data} The study was approved by Purdue University's Institutional Review Board. Upon obtaining informed consent, 20 participants with a valid US driver's license (7 males and 13 females), aged between 19 and 33 (Mean=24.17, SD=4.35) participated in the study. The experiment lasted approximately one hour on average, and the participants were compensated for their time at a rate of \$5 per 15 minutes. Using the timestamps recorded for entering/exiting a construction zone, as well as engaging/disengaging the automation, the time series for task complexity ($d(k)$) and reliance ($q(k)$) is constructed for each participant with a sampling frequency of 1 Hz. We use $k=0$ to denote the discrete time index at which the ego vehicle arrives at intersection 15 in Fig.~\ref{fig:input_signal}, and $k=N$ to denote the end of the main drive (intersection 32). We start at intersection 15 to mitigate the effects of initial transients in the cognitive states soon after the automation is made available to the user at intersection 7. Finally, we round the timestamps recorded for self-reports to the nearest discrete-time indices ($kT_s$) and construct $\mathcal K_{SR}$. Thus, the following data is available for each participant: 
\begin{itemize}
    \item task complexity, $d(k)$ for $\;0\leq k\leq N$,
    \item reliance, $q(k)$ for $0\leq k\leq N$, and
    \item self-reports, $y(k)$ for $k \in \mathcal{K}_{SR}$.

\end{itemize}

\noindent Furthermore, we use $N_T$ to denote the time index corresponding to the self-report collected at intersection 24 (in Fig.~\ref{fig:input_signal}) and divide each participant's trajectory into two segments. The first segment ($0\leq k\leq N_T$) is used for system identification (Section~\ref{sec:sys_id}), and the second segment ($N_T\leq k\leq N$) is used to demonstrate state estimation (Section~\ref{sec:state_estimator}).

\section{System Identification}\label{sec:sys_id}
For each individual, the following unknowns need to be estimated: (i) the system matrices $A, B, C$ in Eqn.~\ref{eq:affine_dyn}, (ii) the noise bounds $\epsilon_w, \epsilon_v$, and (iii) the regions $\mathcal S_i$ and the associated probabilities $p_i$. These unknowns are estimated for each individual using their data for $0\leq k\leq N_T$. Algorithm~\ref{alg:sysid} summarizes the steps of the proposed system identification procedure.

\subsection{System Matrices}
Estimates of the system matrices, $\hat A, \hat B,\hat C$, are obtained by minimizing the sum of squared prediction errors with respect to the self-reports, such that 
\begin{subequations} \label{eq:sys_ID_matrices}
    \begin{align}
    &\hat A,\hat B,\hat C=\argmin_{A,B,C} \sum_{\substack{k \in \mathcal K_{SR}}}\left(\hat{x}(k) {-} y(k)\right)^2\\
    \text{s.t.  }
    &\hat x^+=\begin{cases}
        A\hat x(k)+ Bd(k) + C\;, k \notin \mathcal K_{SR} \\
        Ay(k)+ Bd(k) + C\;, k \in \mathcal K_{SR}
        \end{cases}\enspace,\label{eq:model_predict}\\
        &\hat x(k+1)=\max(\min(\hat x^+,1)),0),\;\hat x(0)=y(0)\enspace,\label{eq:model_init}\\
         &0\leq A_{ii}<1, A_{ij}=0, \forall i,j=1,2,3, i\neq j\enspace.\label{eq:A_const}
    \end{align}
\end{subequations}
Eqn.~\ref{eq:A_const} enforces a diagonal state matrix based on literature suggesting coupling between trust and cognitive workload is weak \cite{akash_reimagining_2020}. We also restrict $A$ to have non-negative entries under the assumption that the driver's state cannot oscillate in sign between consecutive time-indices under zero input. Furthermore, we restrict the diagonal entries of $A$ such that $A_{ii}<1$ under the assumption that the cognitive state dynamics are stable. The optimization problem described by Eqn.~\ref{eq:sys_ID_matrices} is solved in MATLAB using \textit{fmincon} \cite{the_mathworks_inc_optimization_2021}.

\subsection{Noise Bounds}
Given estimates of the system matrices ($\hat A, \hat B, \hat C$), we estimate the noise bounds, $\hat \epsilon_w, \hat \epsilon_v$, by adapting a reachset conformant identification procedure \cite{gruber_scalable_2023}. Specifically, we require that the set of states consistent with any self-report $y(k)$, given by $\mathcal X^{SR}_{y(k)}$, lie within the set of all possible states reachable by the model ($\hat A,\hat B,\hat C$) under all admissible $w\in \mathcal W$ at time-index $k$, denoted by $\mathcal X(k)$. Note that this can be trivially achieved by a suitably large $\hat \epsilon_w$. Therefore, we estimate the least conservative (smallest) noise bounds, such that
\begin{subequations}\label{eq:sys_ID_bounds}
    \begin{align}
    \hat \epsilon_v, \hat \epsilon_w&=\argmin_{\epsilon_v,\epsilon_w}\;  \|\epsilon_v\|_1+\| \epsilon_w\|_1 \\
    \text{s.t. } 
     &\mathcal X^+(k)= \begin{cases}
        \hat A\mathcal X(k)\oplus \hat B d(k) \oplus \hat C \oplus \mathcal W\;, k \notin \mathcal  K_{SR} \\
        \hat A \mathcal X_{y(k)}\oplus \hat B d(k) \oplus \hat C  \oplus \mathcal W\;, k \in \mathcal K_{SR}
    \end{cases},\label{eq:set_predict_1} \\
    &\mathcal X(k{+}1)=\mathcal X^+(k)\cap \mathcal S, \;\mathcal X(0)=\mathcal X^{SR}_{y(0)}\enspace, \label{eq:set_predict_2}\\
    &\mathcal X^{SR}_{y(k)}=\left(y(k)\oplus \mathcal V\right)\cap \mathcal S \subseteq \mathcal X(k) \;\forall k\in \mathcal K_{SR}\enspace,\label{eq:reachset_conformance}\\
    &\mathcal W=\{w\in \mathbb R^3:|w|\leq \epsilon_w\},\;\mathcal V=\{v\in\mathbb R^3: |v|\leq \epsilon_v\}\enspace,\\
    &\epsilon_v\geq\frac{\Delta}{2} \label{eq:eps_v_const}\enspace.
    \end{align}
\end{subequations}
Eqn.~\ref{eq:reachset_conformance} enforces reachset conformance. The measurement noise bound $\epsilon_v$ is bounded below by uncertainty due to quantization, enforced by Eqn.~\ref{eq:eps_v_const}. We simplify the optimization problem described by Eqn.~\ref{eq:sys_ID_bounds} (see details in Appendix~\ref{app:reachset_constraints}) and solve it in MATLAB using \textit{fmincon}.

\begin{algorithm}[t]
\DontPrintSemicolon
\caption{System Identification}
\label{alg:sysid}
\KwIn{Data $\{d(k), q(k), y(k)\}_{k=0}^{N_T}$; self-report indices $\mathcal{K}_{SR}$; quantization step $\Delta$}
\KwOut{Estimated parameters $\hat{A}, \hat{B}, \hat{C}$; noise bounds $\hat{\epsilon}_w, \hat{\epsilon}_v$; regions $\{\hat{\mathcal{S}}_i, \hat{p}_i, \hat{q}_i\}_{i=1}^{\hat{M}}$; projection $\Pi$; feasible sets $\mathcal{X}^R_q$}
\BlankLine

\tcp{Step 1: Estimate system matrices}
$\hat{A}, \hat{B}, \hat{C} \leftarrow \argmin_{A,B,C} \displaystyle\sum_{k \in \mathcal{K}_{SR}} \left(\hat{x}(k) - y(k)\right)^2$
\hfill subject to Eqns.~(\ref{eq:model_predict})--(\ref{eq:A_const})\;

\BlankLine
\tcp{Step 2: Estimate noise bounds}
$\hat{\epsilon}_v, \hat{\epsilon}_w \leftarrow \argmin_{\epsilon_v,\,\epsilon_w} \|\epsilon_v\|_1 + \|\epsilon_w\|_1$  subject to Eqns.~(\ref{eq:set_predict_1})--(\ref{eq:eps_v_const})\;

\BlankLine
\tcp{Step 3: Identify reliance regions via decision tree}
Compute state trajectory $\hat{x}(k)$ for $0 \leq k \leq N_T$ using $\hat{A}, \hat{B}, \hat{C}$ and Eqns.~(\ref{eq:model_predict})--(\ref{eq:model_init})\;
Fit binary decision tree with inputs $\hat{x}(k)$ and labels $q(k)$, max splits $= 2$\;
Extract regions $\hat{\mathcal{S}}_i$, class labels $\hat{q}_i$, and conditional probabilities $\hat{P}(q{=}1 \mid x \in \hat{\mathcal{S}}_i)$ for $i = 1, \ldots, \hat{M}$\;
$\hat{p}_i \leftarrow \hat{P}(q{=}1 \mid x \in \hat{\mathcal{S}}_i) \;/\; \mathrm{vol}(\hat{\mathcal{S}}_i)$ for each $i$

\BlankLine
\tcp{Step 4: Compute set-valued state trajectory}
Compute $\mathcal{X}(k)$ for $0 \leq k \leq N_T$ via Eqns.~(\ref{eq:set_predict_1})--(\ref{eq:set_predict_2}) using $\hat{A}, \hat{B}, \hat{C}, \hat{\epsilon}_w, \hat{\epsilon}_v$\;

\BlankLine
\tcp{Step 5: Construct feasible state sets for reliance}
\For{$i = 1, \ldots, \hat{M}$}{
    $\mathcal{L}_i \leftarrow \hat{\mathcal{S}}_i \;\cup \displaystyle\bigcup_{\substack{k \in \mathcal{K},\; q(k) = \hat{q}_i}} \mathcal{X}(k)$\;
}
$\mathcal{X}^R_q \leftarrow \displaystyle\bigcup_{i\,:\,\hat{q}_i = q} \mathcal{B}(\mathcal{L}_i)$ \quad for each $q \in \{0, 1\}$\;

\BlankLine
\tcp{Step 6: Identify reliance-influencing subspace}
Let $\mathcal{J} \subseteq \{1,2,3\}$ be the set of state indices appearing in at least one split node of the fitted tree\;
$\Pi \leftarrow \displaystyle\sum_{j \in \mathcal{J}} e_j e_j^\top$\;
\end{algorithm}
\subsection{Reliance Regions and Probabilities}
We estimate the number of reliance regions $\hat M$, the regions $\hat{\mathcal S}_i$, and their associated probabilities $\hat p_i$ (for $i=1,\cdots,\hat M$) via a binary decision tree \cite{breiman_classification_2017}. First, we use the estimated system matrices $\hat A,\hat B,\hat C$ and the measurements $y(k) \forall k\in\mathcal K_{SR}$ to estimate the state trajectory $\hat x(k) \text{ for } 0{\leq} k{\leq} N$ using Eqns.~\ref{eq:model_predict}-\ref{eq:model_init}. Then, a binary decision tree is fit with $\hat{x}(k)$ as the inputs and $q(k)$ as the class labels. The tree is trained in MATLAB using \textit{fitctree}~\cite{the_mathworks_inc_statistics_2021}, which uses the Standard CART (Classification and Regression Trees) algorithm \cite{breiman_classification_2017}. To prevent overfitting, the maximum number of splits is set to 2, thereby imposing $\hat M\leq 3$. The resulting tree defines the regions $\hat{\mathcal{S}}_i$, each associated with a conditional probability estimate $\hat{P}(q=1|x\in \hat{\mathcal{S}}_i)$. The regions are also assigned a predicted class label, denoted by $\hat q_i$, such that $\hat q_i=j$ if $\hat{P}(q=j|x\in \hat{\mathcal{S}}_i)\geq0.5$, where $j\in\{0,1\}$. The densities $\hat{p}_i$ are then recovered by normalizing the conditional probabilities by the volume of their respective regions $\hat{\mathcal{S}}_i$ (using Assumption~\ref{assump:piecewise}).

Next, we construct feasible state sets for reliance ($\mathcal X^R_q$) using the decision tree and the estimated system matrices and noise bounds. We first use the estimates of the system matrices ($\hat{A},\hat{B},\hat{C}$) and noise bounds 
($\hat{\epsilon}_w, \hat{\epsilon}_v$) to compute the set-valued state trajectory $\mathcal{X}(k)$ for $0{\leq} k{\leq} N$ via 
Eqns.~\ref{eq:set_predict_1}-\ref{eq:set_predict_2}. 
\
We then use $\hat{\mathcal S}_i$ to construct 
intermediate sets $\mathcal{L}_i$ for $i=1,\cdots,\hat{M}$, as
\begin{align*}
\mathcal{L}_i = \hat{\mathcal{S}}_i \bigcup_{\substack{k\in \mathcal{K} \\ q(k)=\hat{q}_i}} \mathcal{X}(k)\enspace,
\end{align*}
where $\hat{q}_i$ denotes the predicted class label for region 
$\hat{\mathcal{S}}_i$. 
Finally, we construct the feasible state set $\mathcal{X}^R_{q}$ as
\begin{align*}
\mathcal{X}^R_{q} = \bigcup_{i\,:\,\hat{q}_i=q} \mathcal{B}(\mathcal{L}_i)\enspace.
\end{align*}

\begin{proposition}\label{prop:FSS}
    The set $\mathcal{X}^R_{q}=\bigcup_{i\,:\,\hat{q}_i=q} \mathcal{B}(\mathcal{L}_i)$ is a feasible state set for the reliance observation $q$.
\end{proposition}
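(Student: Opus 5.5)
The plan is to reduce the probabilistic requirement of Definition~\ref{def:FSS_reliance} to a set-inclusion statement, and then verify that inclusion from the construction $\hat{\mathcal S}_i\subseteq\mathcal L_i\subseteq\mathcal B(\mathcal L_i)$. Since $P(x(k)\in\mathcal X\mid q(k)=q)=1$ holds for any set $\mathcal X$ containing the support of the conditional state distribution given $q(k)=q$, it suffices to show one thing. Every state $x\in\mathcal S$ that, under the identified model, can produce the observation $q$ must lie in $\mathcal X^R_q$.

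\textbf{Step 1 (characterize the admissible states).} Under Assumption~\ref{assump:piecewise}, the identified reliance map is piecewise constant on the partition $\{\hat{\mathcal S}_i\}_{i=1}^{\hat M}$ of $\mathcal S$. The decision tree assigns each region the class label $\hat q_i$, so the model-consistent states producing $q$ are those in $\bigcup_{i:\hat q_i=q}\hat{\mathcal S}_i$. In addition, there are states actually realized along the data at times with $q(k)=q$. By construction of the reachable sets (Eqns.~\ref{eq:set_predict_1}--\ref{eq:set_predict_2}) under Assumption~\ref{assump:proc_noise} and the conformance constraint~\eqref{eq:reachset_conformance}, each such realized state lies in $\mathcal X(k)$. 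Hence the set of states that can result in $q$ is contained in
\[
\bigcup_{i:\hat q_i=q}\hat{\mathcal S}_i\;\cup\bigcup_{k:\,q(k)=q}\mathcal X(k).
\]

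\textbf{Step 2 (inclusion).} Each $\mathcal L_i$ contains both $\hat{\mathcal S}_i$ and every $\mathcal X(k)$ with $q(k)=\hat q_i$. Also, $\mathcal B(\cdot)$ is by definition a superset of its argument. Therefore
\[
\bigcup_{i:\hat q_i=q}\hat{\mathcal S}_i\cup\bigcup_{k:\,q(k)=q}\mathcal X(k)\subseteq\bigcup_{i:\hat q_i=q}\mathcal L_i\subseteq\bigcup_{i:\hat q_i=q}\mathcal B(\mathcal L_i)=\mathcal X^R_q .
\]
One point needs a check: for every $k$ with $q(k)=q$, there must exist at least one region with $\hat q_i=q$, so that $\mathcal X(k)$ is indeed absorbed into some $\mathcal L_i$. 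I would handle the degenerate case where the tree assigns no region to label $q$ separately, noting that the observation $q$ is then impossible under the model.

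\textbf{Step 3 (conclude).} Combining Steps 1 and 2, the conditional distribution of $x(k)$ given $q(k)=q$ is supported in $\mathcal X^R_q$, which gives probability one. Hence $\mathcal X^R_q$ satisfies Definition~\ref{def:FSS_reliance}.

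The main obstacle is Step 1, namely making precise what ``can result in $q$'' means. Under the literal model $h(x)=p_i$, a state in a region labeled $1-q$ with $0<p_i<1$ could still produce $q$. I would first try to argue within the identified (classifier-based) model, where the realized states are the relevant support. Those states are covered by the data-driven sets $\mathcal X(k)$ that enter $\mathcal L_i$, and the bounding box $\mathcal B$ absorbs the remaining states near the region boundaries. If a fully general statement were needed, I would instead restrict the claim to the support implied by the identified noise bounds and classifier labels.
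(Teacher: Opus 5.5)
Your proposal is correct and is essentially the paper's proof: the paper likewise uses the reachset construction to place the true state in $\mathcal X(k)$ for $0\le k\le N_T$, then applies the chain $\mathcal X(k)\subseteq\mathcal L_i\subseteq\mathcal B(\mathcal L_i)\subseteq\mathcal X^R_q$ for $\hat q_i=q(k)$, and reads $P(x(k)\in\mathcal X^R_q\mid q(k)=q)=1$ over the training indices with $q(k)=q$, which is exactly the realized-state interpretation you settle on (your extra $\hat{\mathcal S}_i$ term and your caveat about the literal model $h(x)=p_i$ are not addressed there). One correction: in your degenerate case (no leaf with $\hat q_i=q$ although $q(k)=q$ occurs in the training data) the observation is not ``impossible under the model''---it was observed---so $\mathcal X^R_q=\emptyset$ genuinely fails and a tacit nondegeneracy assumption is needed (the paper's proof also makes this assumption silently), and your remark that $\mathcal B$ absorbs the remaining boundary states is unjustified and should be dropped.
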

\begin{proof}
The set-valued state trajectory $\mathcal{X}(k)$ computed via Eqns.~\ref{eq:set_predict_1}-\ref{eq:set_predict_2} is guaranteed to contain the true state trajectory, such that $P(x(k)\in \mathcal{X}(k))=1$ for all $0\leq k \leq N_T$, due to Assumptions~\ref{assump:proc_noise}-\ref{assump:meas_noise} and constraints enforced by Eqns.~\ref{eq:set_predict_1}-\ref{eq:reachset_conformance}. 

Suppose $q(k) = q$ for some $k$. By construction,
\begin{align*}
    \mathcal X(k)\subset \underbrace{\mathcal L_{i} \subseteq \mathcal B(\mathcal L_{i})}_{\forall i:\hat q_i=q} \subseteq \bigcup_{i\,:\,\hat{q}_i=q} \mathcal{B}(\mathcal{L}_i)=\mathcal{X}^R_{q}\enspace,
\end{align*}
and thus $P\left(x(k) \in \mathcal{X}^R_{q}\right)=1$. This is true for all $0\leq k\leq N_T$ at which $q(k)=q$. Therefore, $P\left(x(k) \in \mathcal{X}^R_{q}\mid q(k)=q\right)=1$, and thus $\mathcal{X}^R_{q}=\bigcup_{i\,:\,\hat{q}_i=q} \mathcal{B}(\mathcal{L}_i)$ is a feasible state set (by Definition~\ref{def:FSS_reliance}).
\end{proof}

From the decision tree, we can also identify the cognitive states that influence reliance. While all three cognitive states have the potential to influence reliance, in our prior work~\cite{jeevanandam_hybrid_2025} we found that the subset of cognitive states influencing reliance on the automation varies across individuals. Therefore, in this work, we use splits in the decision tree to identify the subspace that influences reliance. Specifically, each split in the binary decision tree partitions the state space along a single state dimension $x_j$, using a threshold. Let $\mathcal J\subseteq \{1, 2, 3\}$ denote the set of state indices $j$ that appear in at least one split node of the fitted tree. Then, the subspace influencing reliance can be characterized using the projection matrix
\begin{align*}
    \Pi = \sum_{j \in \mathcal{J}} e_j e_j^T\enspace.
\end{align*}
 The projected state $\tilde{x} = \Pi x$ retains the elements in $x$ that are predictive of operator reliance, excluding the dimensions that are not. Identifying such a subspace enables dimensionality reduction in the set-valued state estimator; during a real-time implementation, only the cognitive state dimensions in $\mathcal{J}$ need to be solicited via self-reports, potentially reducing user workload due to querying. Note that in this work, since reliance is the only measurement available at all time indices, the uncertainty in states outside the subspace defined by $\mathcal J$ would, in general, grow over time and could only be reduced when self-reports are collected.  Incorporating measurements that are influenced by all cognitive state dimensions to enable full-state estimation is left to future work.

\begin{figure*}[b]
\centering
\begin{tikzpicture}[
    block/.style={
        draw, rectangle, thick,
        minimum width=2.2cm, minimum height=1.0cm,
        align=center, font=\small
    },
    arr/.style={-{Stealth[length=6pt]}, thick},
    line/.style={thick},
]
 
\node[block] (driver)   {Human Driver \\ $x(k)$};
\node[block, right=3.7cm of driver] (selfrep) {Self-Report\\($\mathcal Q, v(k)$)};
 
\node[block, below=1.0cm of driver, xshift=0cm] (pred)
    {Prediction\\  $(\hat{A},\hat{B},\hat{C},\hat{\mathcal{W}})$};
\node[block, right=2.0cm of pred]     (reliance)  {Reliance-Based \\ Update $(\mathcal{X}^R_{q(k)})$};
\node[block, right=2.0cm of reliance] (srupdate)  {SR-Based\\ Update $(\mathcal{X}^{SR}_{y(k)})$};
\node[block, below=0.5cm of reliance] (delay)     {Unit Delay};
 
\node[inner sep=0pt, right=0.4cm of srupdate] (rightpad) {};
\begin{scope}[on background layer]
    \node[draw=gray!60, line width=2.5pt, rounded corners=14pt,
          fit=(pred)(reliance)(srupdate)(delay)(rightpad),
          inner xsep=14pt, inner ysep=12pt] (innerbox) {};
\end{scope}
 
\draw[arr] ([xshift=-1.5cm, yshift= 0.2cm] driver.west)
    node[left,font=\small]{$w(k)$} -- ([yshift= 0.2cm] driver.west);
\draw[arr] ([xshift=-1.5cm, yshift=-0.2cm] driver.west)
    node[left,font=\small]{$d(k)$} -- ([yshift=-0.2cm] driver.west);
 
\coordinate (dIn)   at ([xshift=-1.5cm, yshift=-0.2cm] driver.west);
\coordinate (dDown) at (dIn |- pred.west);
\draw[line] (dIn)   -- (dDown);
\draw[arr]  (dDown) -- (pred.west);
 
\draw[arr] ([yshift=0.2cm] driver.east) -- ([yshift=0.2cm] selfrep.west);
 
\coordinate (qExit) at ([yshift=-0.2cm] driver.east);
\coordinate (qDrop) at (qExit -| reliance.north);
\draw[line] (qExit) -- node[below, font=\small]{$q(k)$} (qDrop);
\draw[arr]  (qDrop) -- (reliance.north);
 
\coordinate (srExit) at (selfrep.east);
\coordinate (srDrop) at (srExit -| srupdate.north);
\draw[line] (srExit) -- node[above, font=\small, pos=0.4]{$y(k),$} node[below, font=\small, pos=0.3]{$\;\;\;\;\,k \in \mathcal{K}_{SR}$} (srDrop);
\draw[arr]  (srDrop) -- (srupdate.north);
 
\draw[arr] (pred.east) --
    node[above, font=\small]{$\mathcal{X}(k\mid k{-}1)$}
    (reliance.west);
 
\draw[arr] (reliance.east) --
    node[above, font=\small]{$\mathcal{X}(k\mid q(k))$}
    (srupdate.west);
 
\coordinate (outStart) at (srupdate.east);
\coordinate (branch)   at ([xshift=0.6cm] outStart);
\coordinate (delayIn)  at (branch |- delay.east);
 
\draw[arr] (outStart) -- ++(1.6,0)
    node[right, font=\small]{$\mathcal{X}(k\mid k)$};
 
\draw[line] (branch)  -- (delayIn);
\draw[arr]  (delayIn) -- (delay.east);
 
\coordinate (corner) at (delay.west -| pred.south);
\draw[line] (delay.west) --
    node[above, font=\small]{$\mathcal{X}(k{-}1\mid k{-}1)$}
    (corner);
\draw[arr]  (corner) -- (pred.south);
 
\end{tikzpicture}
\caption{Proposed set-valued state estimation architecture. The gray box highlights the set-valued state estimator, whose inputs are the measured disturbance ($d(k)$), observed reliance ($q(k)$), and cognitive state self-reports ($y(k)$).}
\label{fig:svse}
\end{figure*}

 \begin{algorithm}[t]
\DontPrintSemicolon
\caption{Set-Valued State Estimation}
\label{alg:svse}
\KwIn{Estimated parameters $\hat{A}, \hat{B}, \hat{C}$; noise sets $\hat{\mathcal W},\hat{\mathcal V}$; projection $\Pi$; feasible sets $\mathcal{X}^R_q$; window $n_q$; data $\{d(k), q(k), y(k)\}$}
\KwOut{Set-valued state estimates $\{\mathcal{X}(k \mid k)\}_{k=N_T}^{N}$}
\BlankLine
$\mathcal{X}(N_T|N_T) \leftarrow \Pi\, \mathcal{X}^{SR}_{y(N_T)}$ \tcp*[r]{Initialization}
$\mathcal X\leftarrow \mathcal{X}(N_T|N_T)$\tcp*[r]{Temporary variable}

$\texttt{flag}_\text{retrain} \leftarrow \texttt{false}$\tcp*[r]{If true, re-train}
\BlankLine
\For{$k = N_T + 1, \ldots, N$}{
    $\mathcal{X} \leftarrow (\hat{A}_\Pi\, \mathcal{X} \oplus \hat{B}_\Pi d(k) \oplus \hat{C}_\Pi \oplus \hat{\mathcal{W}}_\Pi)\cap \Pi \mathcal S$
    \tcp*[r]{One-step-ahead prediction}
    $\mathcal M=\Pi \mathcal X^R_{q(k)}$
    
    \For{$m=1:\min(n_q,k{-}N_T)$}{
    $\mathcal M(k{-}m|k{-}m)=\mathcal X^R_{q(k{-}m)}$
    
    \For{$j=1:m$}{$\mathcal M(k{-m}{+}j| k{-}m)\leftarrow \hat{A}\, \mathcal M(k{-m}{+}j{-}1| k{-}m) \oplus \hat{B} d(k{-}m{+}j) \oplus \hat{C} \oplus \hat{\mathcal{W}}$}
    $\mathcal M\leftarrow\mathcal M \cap \Pi\mathcal M(k|k{-}m)$
    }
    \tcp*[r]{Propagate past reliance FSS}
    $\mathcal{X}_\text{rel} \leftarrow \mathcal{B}(\mathcal{X}) \cap \mathcal{M}$
    \tcp*[r]{Update using $q(k)$}
    \uIf{$\mathcal{X}_\text{rel} = \emptyset$}{
        $\mathcal{X} \leftarrow \mathcal{B}\!\left(\mathcal{X} \cup \Pi \mathcal X^R_{q(k)}\right)$\;
        $\texttt{flag}_\text{retrain} \leftarrow \texttt{true}$\;
    }
    
    \If{$k \in \mathcal{K}_{SR}$}{
        $\mathcal{X} \leftarrow \mathcal{X} \cap \Pi\, \mathcal{X}^{SR}_{y(k)}$
        \tcp*[r]{Update using $y(k)$}
        \uIf{$\mathcal{X} = \emptyset$}{
            $\texttt{flag}_\text{retrain} \leftarrow \texttt{true}$
        }
            
        \If{$\texttt{flag}_\text{retrain}$}{
        $k_o\leftarrow k$\;
            Re-estimate $\hat{A}, \hat{B}, \hat{C}, \hat{\mathcal W}, \hat{\mathcal V}, \Pi, \mathcal X^{R}_q$ with data for $0 \leq k \leq k_o$ using Algorithm~\ref{alg:sysid}
            
            $\mathcal{X} \leftarrow \Pi\, \mathcal{X}^{SR}_{y(k)}$\;
            $\texttt{flag}_\text{retrain} \leftarrow \texttt{false}$\;
        }\tcp*[r]{Online model update}
    }
    $\mathcal{X}(k \mid k) \leftarrow \mathcal{X}$\;
}
\end{algorithm}

\section{Set-Valued State Estimation}\label{sec:state_estimator}
In this section, we describe our set-valued state estimation scheme, illustrated in Fig.~\ref{fig:svse}. Upon identifying the system matrices, noise bounds, and reliance regions for an individual (Algorithm~\ref{alg:sysid}), we estimate their set-valued cognitive states for $N_T\leq k\leq N$ in the subspace $\mathcal S_\Pi$. Each step of the estimation algorithm is described below and then summarized as Algorithm~\ref{alg:svse}. We use the zonoLAB~\cite{koeln_zonolab_2024} toolbox in MATLAB to define and compute sets.
 
\subsubsection{Initialization}
The state estimator is initialized at time-index $k=N_T$ using the self-report $y(N_T)$, such that 
\begin{align*}
    \mathcal X(N_T\mid N_T)=\Pi \mathcal X^{SR}_{y(N_T)}=(y(N_T)_\Pi\oplus \hat{\mathcal V}_\Pi)\cap\mathcal S_\Pi,
\end{align*}
where $\hat{\mathcal V}=\{v\in\mathbb R^3: |v|\leq \hat \epsilon_v\}$.
\subsubsection{One-Step-Ahead Set-Valued Prediction}
For subsequent time indices $k{>}N_T$, we compute the set of all possible states reachable in one time step from $\mathcal X(k{-}1\mid k{-}1)$ via the estimated model (characterized by $\hat A,\hat B,\hat C$) under the disturbance $d(k)$ and process noise $w(k)\in \hat{\mathcal W}=\{w\in \mathbb R^3:|w|\leq \hat \epsilon_w\}$. This reachable set, denoted by $\mathcal X(k\mid k{-}1)$, is computed as
\begin{align}\label{eq:one_step_ahead}
    \mathcal X(k\mid k{-}1)=(\hat A_\Pi\mathcal X(k{-}1\mid k{-}1)\oplus \hat B_\Pi d(k-1) \oplus &\hat C_\Pi  \oplus \hat{\mathcal W}_\Pi) \notag\\ &\cap \mathcal S_\Pi \enspace.
\end{align}

\subsubsection{Set-Update Using Observed Reliance}
At time-index $k$, upon observing the human's reliance on the automation $q(k)$, the set-valued state estimate is updated by computing the intersection between $\mathcal X(k\mid k{-}1)$ and the feasible state set for $q(k)$, such that
\begin{align*}
    \mathcal X(k\mid q(k))=\mathcal X(k\mid k{-}1)\cap \Pi\mathcal X^R_{q(k)}\enspace.
\end{align*}
However, due to the conservativeness of $\mathcal X^{R}_{q(k)}$, it may be beneficial to simultaneously process a block of past measurements $q(k)$ \cite{casini_set_2024}. Following the procedure described in \cite{casini_set_2024}, we define $\mathcal M(k\mid k{-}m)$ to be the $m$-step ahead propagation of $\mathcal X^R_{q(k{-}m)}$ for $k>m$, such that 
\begin{align*}
    \mathcal M(k\mid k{-}m)=A^m\mathcal X^R_{q(k{-}m)} \oplus \sum_{j=0}^{m-1}A^{m-1-j}Bd(k{-}m{+}j)\oplus \\\left(\sum_{j=1}^{m} \hat A^{m-j}\right)\hat C \oplus \bigoplus_{j=1}^{m} \hat A^{m-j}\hat{\mathcal W}\enspace.
\end{align*}
We then use the past $N_q$ observations of reliance to update the set-valued state estimate, such that
\begin{align}\label{eq:update_rel}
    \mathcal X(k\mid q(k))=\mathcal B(\mathcal X(k\mid k{-}1))\bigcap_{m=0}^{N_q} \Pi\mathcal M(k\mid k-m)\enspace.
\end{align}
Here, $\mathcal B(\mathcal X(k\mid k{-}1))$ is used as an over-approximation to limit the complexity of $\mathcal X(k\mid q(k))$ due to recursive intersection operations. The parameter $N_q$ can be used to trade off computational complexity and specificity of the set-estimate. In this work, $N_q=10$.
\subsubsection{Set-Update Using Self-Reports}
At some $k\in \mathcal K_{SR}$, we use the feasible state set for the self-reported cognitive state vector $y(k)$ to update the set-valued state estimate, such that
\begin{align}\label{eq:update_sr}
    \mathcal X(k\mid q(k),y(k))=\mathcal X(k\mid q(k))\cap \Pi \mathcal X^{SR}_{y(k)}\enspace.
\end{align}

\subsubsection{Online Model Update}
The set-valued state estimates $\mathcal X(k\mid k)$ are guaranteed to contain the projected state $\tilde{x}(k)=\Pi x(k)$, under the assumption that the training data has captured all behaviors characteristic of the human user. However, this assumption is unrealistic---the system matrices, noise bounds, and reliance regions estimated using the training set may not capture human behaviors observed in the testing set. Due to such discrepancies, the set-update steps (Eqns.~\ref{eq:update_rel} and \ref{eq:update_sr}) can yield empty sets, which we interpret as a failure of the parameter estimates to guarantee containment of the true state. Thus, if at some $k=k_d$,
\(
    \mathcal X(k\mid q(k))\;\text{or}\;\mathcal X(k\mid q(k),y(k))
\)
is empty, we re-train the model at the next available self-report (e.g. at time-index $k_o\in \mathcal K_{SR}$, $k_o\geq k_d$). Furthermore, to continue producing valid state estimates for $k_d\leq k\leq k_o$, we let $\mathcal X(k\mid k)$ be an over-approximated set, such that
\begin{align*}
    \mathcal X(k\mid k)=\begin{cases}
        \mathcal X(k\mid q(k)),\; \text{if } \mathcal X(k\mid q(k))\neq \emptyset\\
        \mathcal B\left(\mathcal X(k\mid k{-}1)\cup \Pi\mathcal X^R_{q(k)}\right), \;\text{otherwise}\enspace.
    \end{cases}
\end{align*}

At $k=k_o$, we re-estimate the system matrices, noise bounds, and reliance regions using Algorithm~\ref{alg:sysid} with data for $0\leq k\leq k_o$. The state estimate is reset at $k=k_o$ using $y(k)$, such that $\mathcal X(k\mid k)=\Pi \mathcal X^{SR}_{y(k_o)}$.
\begin{remark}
    The subspace influencing reliance may change during the online model update step; we update the projection matrix $\Pi$ based on the re-trained decision tree, and compute set-valued state estimates in the (potentially) new subspace for subsequent time indices.
\end{remark}

\section{Results and Discussion}\label{sec:results}
For each participant, we apply Algorithm~\ref{alg:sysid} to estimate their system matrices, noise bounds, and reliance regions using the procedure outlined in Section~\ref{sec:sys_id}. We do not analyze data from 5 participants: P02, P11, and P16 did not change their reliance during the training segment ($0\leq k\leq N_T$), precluding decision tree identification, and P17 and P19 crashed into a vehicle or road object during the experiment. For the remaining 15 participants, we apply Algorithm~\ref{alg:svse} to compute their set-valued state estimates for $N_T\leq k\leq N$.

Fig.~\ref{fig:P017_combined} illustrates the identified decision tree, feasible state sets for reliance, and set-updates at time index $k=244$ (first self-report in testing phase) for participant P01. The decision tree structure (Fig.~\ref{fig:P017_tree}) indicates that the participant's reliance is influenced by two of the three cognitive states---risk perception and trust. Specifically, they are more likely to rely on the automation when their perception of risk ($R$) is less than $0.3$ and their trust ($T$) is greater than $0.65$. Fig.~\ref{fig:FSS_P017} illustrates the feasible state sets $\mathcal{X}_0^R$ and  $\mathcal{X}_1^R$. Recall that these sets are not estimates of the decision tree partitions $\{\hat{\mathcal{S}}_i\}$, but rather guaranteed outer approximations 
of the true decision regions (Proposition~\ref{prop:FSS}). As a consequence, unlike the sets $\{\hat{\mathcal{S}}_i\}$ defined by the decision tree, the sets $\mathcal{X}_0^R$ and $\mathcal{X}_1^R$ need not be disjoint---the crosshatched region in Fig.~\ref{fig:FSS_P017} ($\mathcal{X}_0^R \cap \mathcal{X}_1^R$) illustrates the overlap between the two regions due to conservatism. Fig.~\ref{fig:P017_sets} shows the sets updated using reliance ($\mathcal X(k \mid q(k))$), and the feasible set for the self-report ($\mathcal X^{SR}_{y(k)}$) collected at time index $k=244$; the feasible set is contained ($\mathcal X^{SR}_{y(k)}\subset \mathcal X(k \mid q(k))$), indicating reachset conformance during the testing phase.   

\begin{figure*}
    \centering
    \begin{subfigure}[b]{0.3\linewidth}
        \centering
        \scalebox{0.7}{\begin{tikzpicture}[
    every node/.style={font=\large},
    branch/.style={draw, regular polygon, regular polygon sides=3, inner sep=1pt, blue},
    leaf/.style={draw, circle, fill=blue, inner sep=3pt, blue},
    edge/.style={blue}
]

\node[branch] (root) at (4, 0) {};

\node[branch] (left) at (1.5, -2.5) {};

\node[leaf] (right) at (6.5, -2.5) {};
\node[below=4pt of right] {$\hat{q}=0$};

\node[leaf] (leftleft) at (0, -5) {};
\node[below=4pt of leftleft] {$\hat q=0$};

\node[leaf] (leftright) at (3, -5) {};
\node[below=4pt of leftright] {$\hat q=1$};

\draw[edge] (root) -- node[above left, black] {$R < 0.30$} (left);
\draw[edge] (root) -- node[above right, black] {$R \geq 0.30$} (right);
\draw[edge] (left) -- node[above left, black] {$T < 0.65$} (leftleft);
\draw[edge] (left) -- node[above right, black] {$T \geq 0.65$} (leftright);

\end{tikzpicture}}
        \caption{Decision tree splits identified using training data; reliance is influenced by trust and risk perception.}
        \label{fig:P017_tree}
    \end{subfigure}
    \hfill
    \begin{subfigure}[b]{0.3\linewidth}
        \centering
        \includegraphics[width=\linewidth]{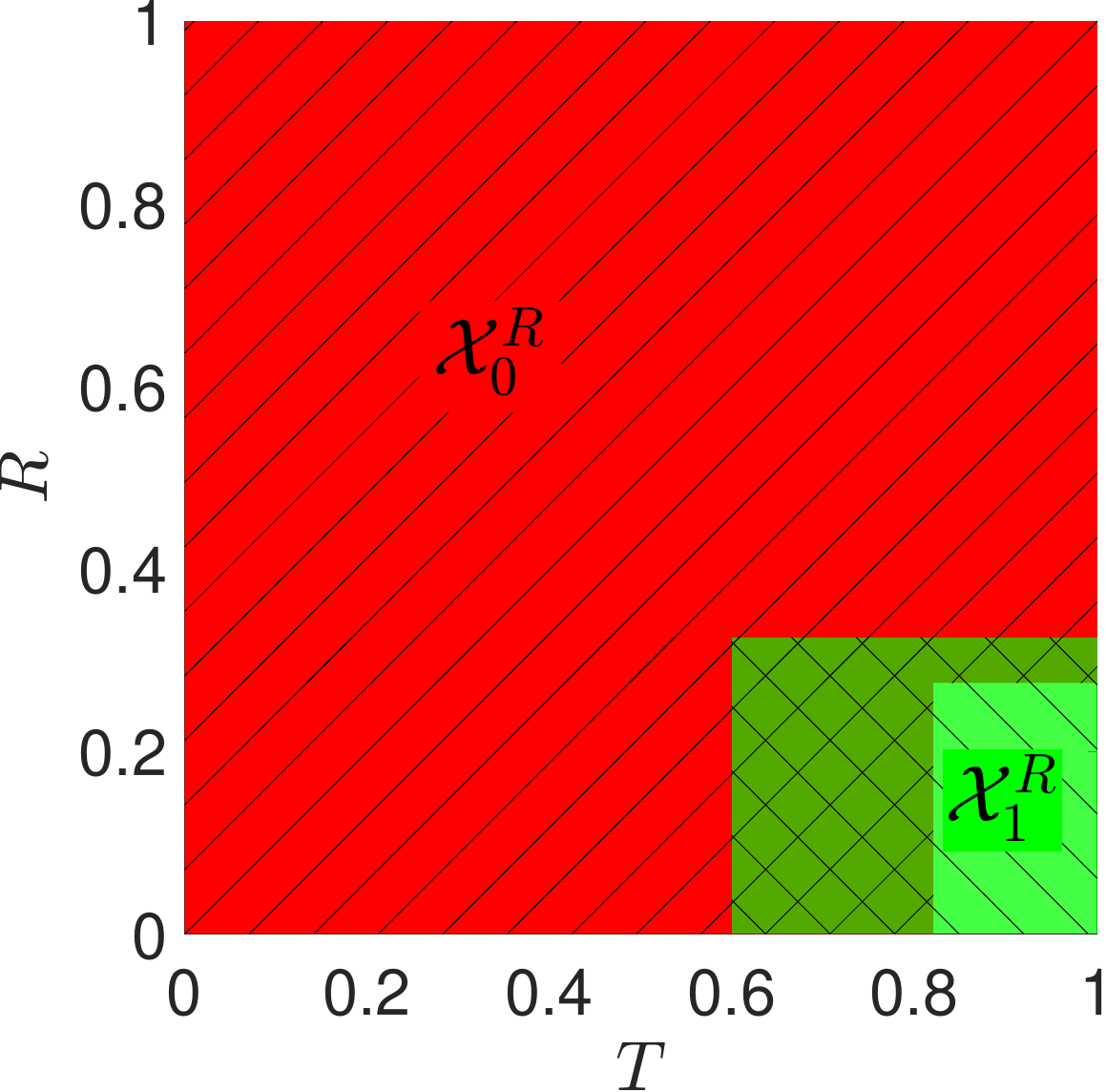}
        \caption{Feasible state sets for reliance; the sets $\Pi \mathcal X^{R}_0$ (red, $45^\circ$ hatch) and $\Pi \mathcal X^{R}_1$ (green, $135^\circ$ hatch) have a non-empty intersection (dark green, crosshatched).}
        \label{fig:FSS_P017}
    \end{subfigure}
    \hfill
    \begin{subfigure}[b]{0.3\linewidth}
        \centering
        \includegraphics[width=\linewidth]{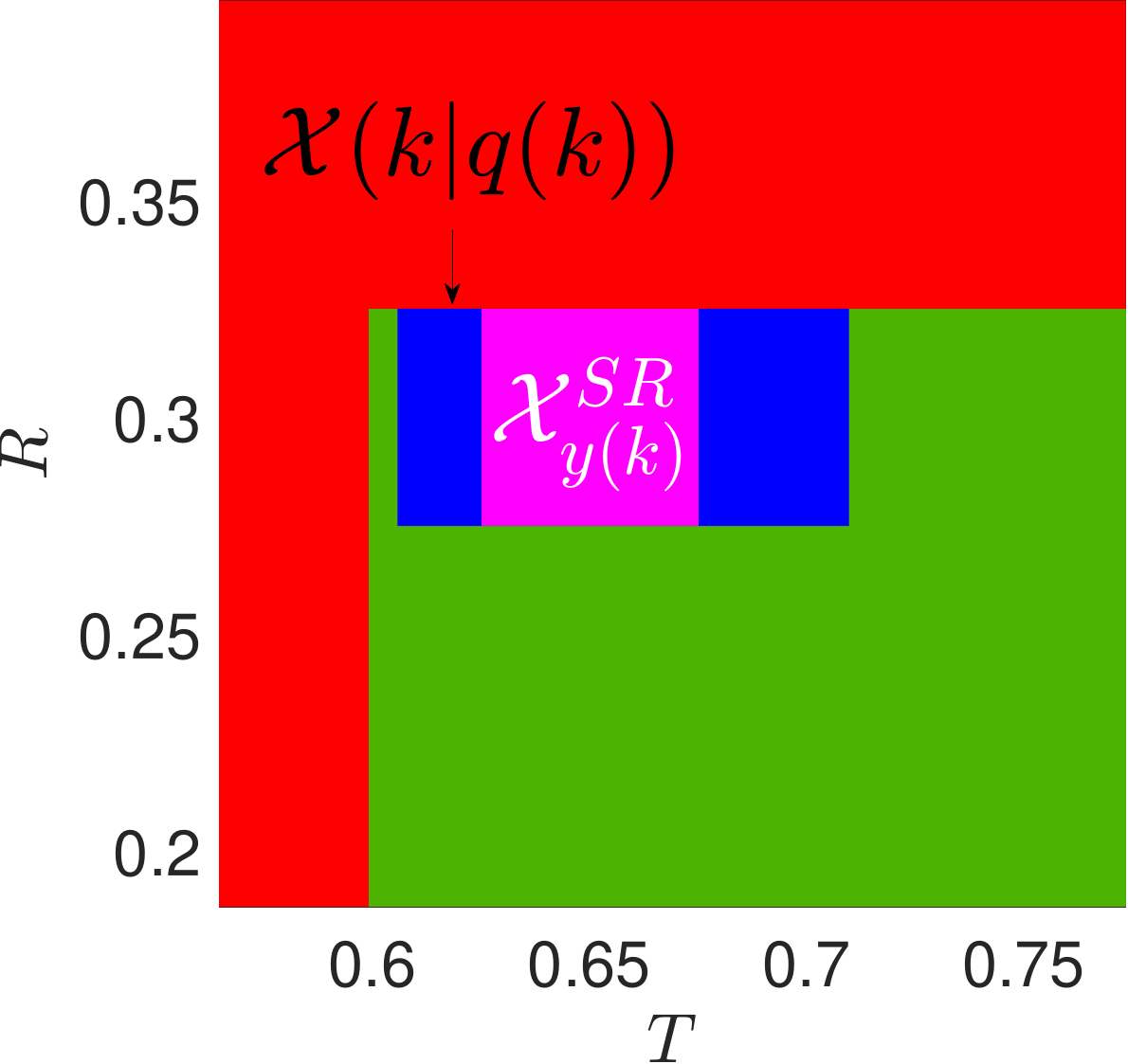}
        \caption{Zoomed-in view of reliance-updated set $\mathcal{X}(k|q(k))$ (blue) and self-report feasible set $\Pi \mathcal{X}^{SR}_{y(k)}$ (magenta) at $k=244$; the feasible set (magenta) is contained within $\mathcal{X}(k|q(k))$ (blue).}
        \label{fig:P017_sets}
    \end{subfigure}
    \caption{The identified decision tree, feasible state sets for reliance, and set-updates for participant P01.}
    \label{fig:P017_combined}
\end{figure*}

\begin{figure*}
    \centering
    \begin{subfigure}[b]{0.3\linewidth}
        \centering
       \includegraphics[width=\linewidth]{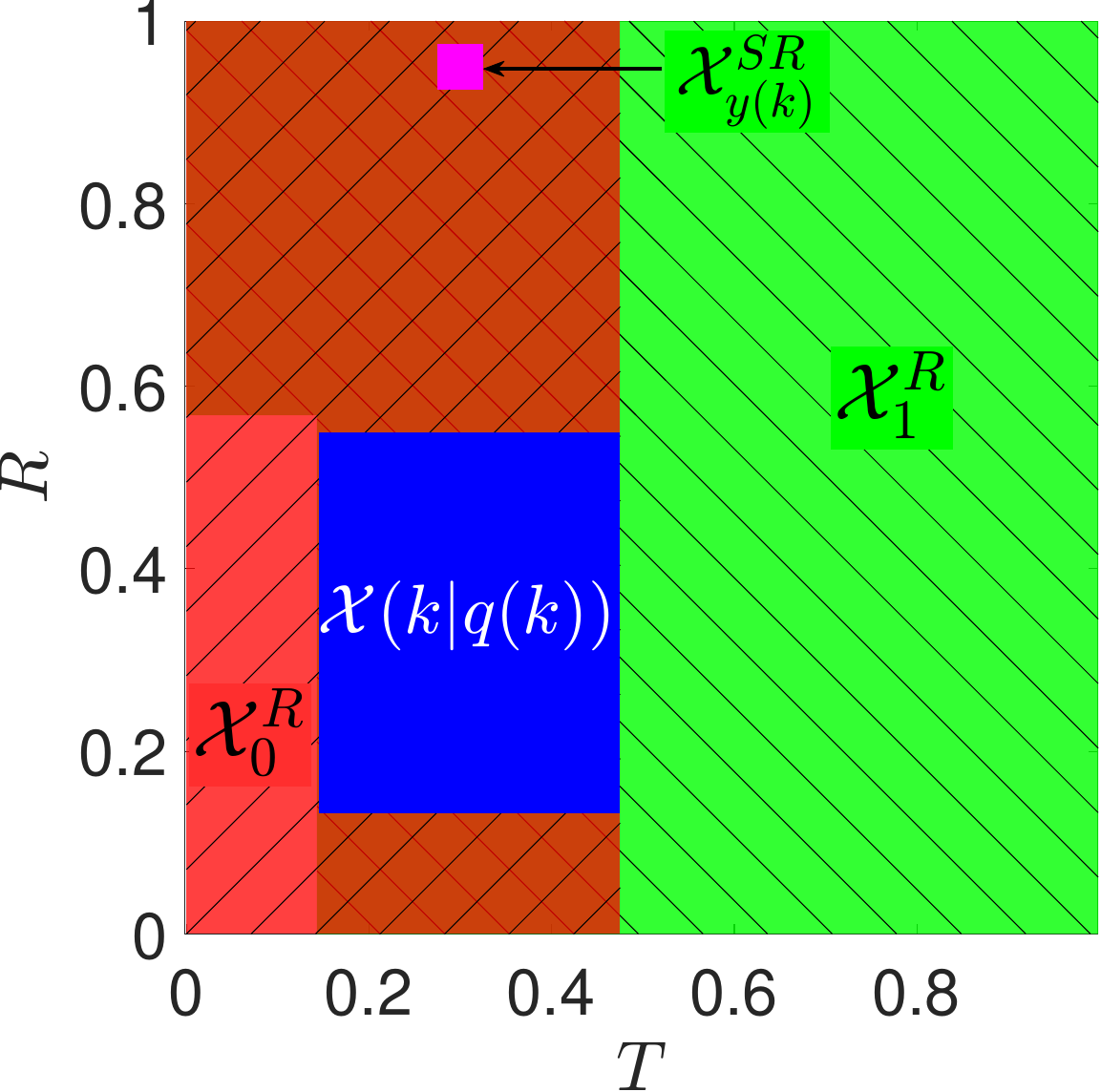}
        \caption{Visualization of failed containment of the feasible set $\Pi \mathcal{X}^{SR}_{y(k)}$ (magenta) by $\mathcal{X}(k|q(k))$ (blue) at $k=319$.}
        \label{fig:P004_sets}
    \end{subfigure}
    \hfill
    \begin{subfigure}[b]{0.6\linewidth}
        \centering
        \includegraphics[width=\linewidth]{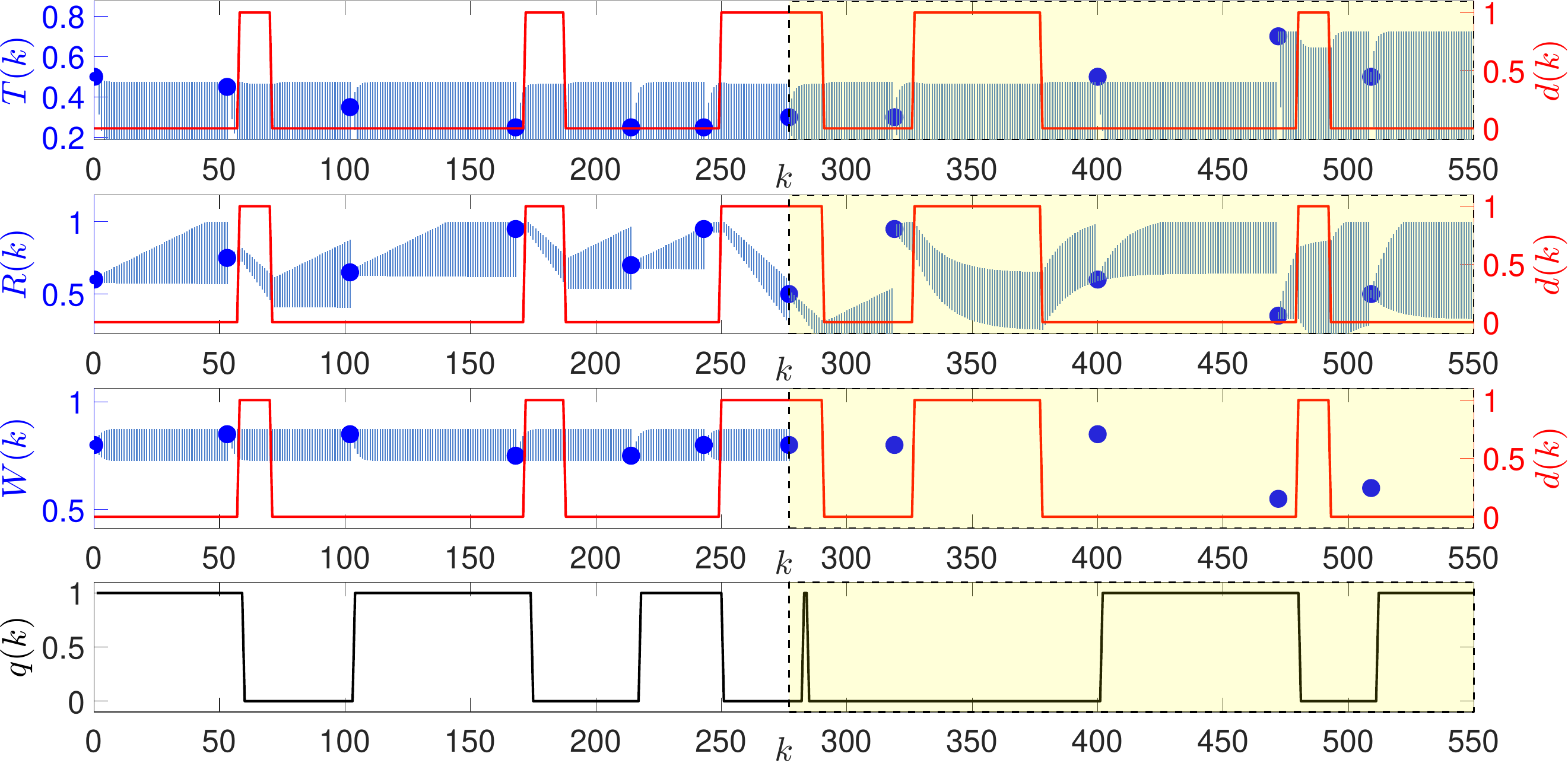}
        \caption{Time-series data collected for participant P04, along with the reachable sets projected onto each dimension (blue envelopes). Blue dots indicate self-reported cognitive states. The testing phase ($N_T\leq k\leq N$) is highlighted in yellow.}
        \label{fig:P004_trajectory}
    \end{subfigure}
    \caption{An illustration of failed reachset conformance during testing for participant P04 due to discrepancy between training and testing data.}
    \label{fig:P004_combined}
\end{figure*}

\begin{table}[t]
\centering
\caption{Consistency $O(\mathcal{X})$, uncertainty $\overline{\sigma}(\mathcal X)$, average model training time $\overline{t}_{\text{train}}$, and average set-computation time $\overline{t}_{\text{set}}$ of the set-valued state estimates across participants.}
\label{tab:estimation_performance}
\begin{tabular}{lcccc}
\toprule
Participant ID & $O(\mathcal{X})$ (\%) & $\overline{\sigma}(\mathcal{X})$ (\%) & $\overline{t}_{\text{train}}$ (s) & $\overline{t}_{\text{set}}$ (s) \\
\midrule
P01 & 100.00 & 0.42  & 7.736  & 0.057 \\
P03 & 75.00  & 12.47 & 6.983  & 0.035 \\
P04 & 50.00  & 18.91 & 9.331  & 0.054 \\
P05 & 100.00 & 4.23  & 8.640  & 0.060 \\
P06 & 75.00  & 5.05  & 10.189 & 0.067 \\
P07 & 100.00 & 33.94 & 5.734  & 0.037 \\
P08 & 80.00  & 24.98 & 5.871  & 0.033 \\
P09 & 75.00  & 11.57 & 7.644  & 0.059 \\
P10 & 100.00 & 38.37 & 7.280  & 0.038 \\
P12 & 75.00  & 5.29  & 8.586  & 0.063 \\
P13 & 75.00  & 2.79  & 8.098  & 0.050 \\
P15 & 75.00  & 6.69  & 6.428  & 0.042 \\
P18 & 66.67  & 2.95  & 7.306  & 0.050 \\
P20 & 50.00  & 7.77  & 9.097  & 0.056 \\
\bottomrule
\end{tabular}
\end{table}

\subsection{State Estimation Performance}

To assess the performance of the state estimator, we calculate
\begin{enumerate}
    \item the \textit{consistency} of the estimates, denoted by $O(\mathcal{X})$, as the percentage of time steps $k\in \mathcal K_{SR}$ for which the set of states consistent with the self-reports, $\Pi \mathcal X^{SR}_{y(k)}$, overlaps with the set-valued estimate, $\mathcal X(k \mid q(k))$, and
    \item the \textit{uncertainty} of the estimates, denoted by $\overline{\sigma}(\mathcal X)$, as the mean percentage of the subspace $\Pi \mathcal S$ covered by the estimates $\mathcal X(k\mid k)$ over $N_T\leq k\leq N$.
\end{enumerate}

Table~\ref{tab:estimation_performance} summarizes the consistency and uncertainty of the set-valued state estimates across participants, as well as the average model training and retraining (where applicable) times ($\overline{t}_{train}$) and the average set-computation times ($\overline{t}_{set}$). The estimator is $100\%$ consistent for four participants (P01, P05, P07, P10), and at least $75\%$ for 11 participants in total. The two participants with the lowest consistency (P04 and P20, both at $50\%$) are representative cases in which system matrices and noise bounds estimated from the training data were not representative of participants' testing behavior. Fig.~\ref{fig:P004_combined} illustrates failed reachset conformance during testing for participant P04. Fig.~\ref{fig:P004_sets} shows the reliance-updated set $\mathcal X(k\mid q(k))$ and $\Pi\mathcal X^{SR}_{y(k)}$ at $k=319$; the two sets do not overlap, such that $\mathcal X(k\mid q(k))\cap \Pi\mathcal X^{SR}_{y(k)}=\emptyset$. Fig.~\ref{fig:P004_trajectory} shows the time-series data for participant P04. Reachable sets (blue envelopes) are constructed for all states during the training phase ($0 \leq k < N_T$) via Eqns.~\ref{eq:set_predict_1}--\ref{eq:set_predict_2}, achieving reachset conformance for all cognitive states. During the testing phase ($N_T \leq k \leq N$, highlighted in yellow), the set-valued state estimates are projected onto the dimensions influencing reliance for this participant (trust and risk). At $k = 319$, the self-reported $R(k)$ falls outside the set estimate, indicating a conformance failure for risk perception. Consequently, data for $0\leq k\leq319$ is used to re-estimate the system matrices, noise bounds, and reliance regions to identify model parameters that enforce reachset conformance over $0\leq k\leq 319$. At $k=472$, conformance fails for both trust and risk perception; the re-estimated noise bounds enforcing reachset conformance on $0\leq k\leq 472$ are larger in magnitude, resulting in wider projected reachable sets for $k\geq 472$ in Fig.~\ref{fig:P004_trajectory}.

In Table~\ref{tab:estimation_performance}, uncertainty varies considerably across participants, ranging from $0.42\%$ to $38.37\%$, depending on the estimated process noise bounds; higher uncertainty reflects participants whose cognition is not well captured by the affine dynamical model. Among the 15 participants, uncertainty below $15\%$ is achieved for 10 of them. Notably, P01 achieves $100\%$ consistency with approximately $0.4\%$ average uncertainty, suggesting that the estimated sets reliably contain the true state while remaining tightly concentrated around it.  Model training and retraining times average between $7$--$8$ seconds, which is acceptable given that retraining is triggered only when necessary. Importantly, the average set-computation times range from $0.033$ to $0.067$ seconds across participants, well below the sampling interval ($1$ second), confirming that the estimator operates faster than real time and could be implemented online. All computations were carried out on a desktop PC with an Intel Core i7 (9th Gen) CPU.

\subsection{$N_p$-step-ahead predictions of reliance}
\begin{table}[t]
\centering
\caption{AUC-ROC comparison of $N_p$-step-ahead reliance predictions across methods.}
\label{tab:auc_roc}
\begin{tabular}{clccc}
\toprule
$N_p$ & Method & Mean & Std & Median \\
\midrule
\multirow{3}{*}{15} & Open-Loop            & 0.67 & 0.24 & 0.73 \\
                    & Particle Filter      & 0.72 & 0.22 & 0.76 \\
                    & Set-Valued Estimator & \textbf{0.74} & 0.24 & 0.75 \\
\midrule
\multirow{3}{*}{30} & Open-Loop            & 0.67 & 0.26 & 0.74 \\
                    & Particle Filter      & 0.71 & 0.24 & 0.76 \\
                    & Set-Valued Estimator & \textbf{0.74} & 0.24 & \textbf{0.76} \\
\midrule
\multirow{3}{*}{45} & Open-Loop            & 0.68 & 0.27 & 0.76 \\
                    & Particle Filter      & 0.74 & 0.25 & 0.80 \\
                    & Set-Valued Estimator & \textbf{0.80} & \textbf{0.20} & \textbf{0.90} \\
\midrule
\multirow{3}{*}{60} & Open-Loop            & 0.68 & 0.27 & 0.82 \\
                    & Particle Filter      & 0.76 & 0.25 & 0.80 \\
                    & Set-Valued Estimator & \textbf{0.82} & \textbf{0.21} & \textbf{0.87} \\
\bottomrule
\end{tabular}
\end{table}
In addition to its utility for cognitive state estimation at the current time step, our state estimation scheme can also be leveraged for more accurate predictions of reliance behavior for online decision-making.  To that end, we evaluate the utility of the state estimator for $N_p$-step-ahead predictions of reliance, $\hat P(q(k)=1\mid k{-}N_p)$ for $k>N_p$, bench-marked against two baselines: (1) an open-loop, model-based prediction $\hat P(q(k\mid N_T)=1)$ (using Eqn.~\ref{eq:affine_dyn} and decision tree estimated on $0\leq k\leq N_T$), and (2) a particle filter with Gaussian process and measurement noise serving as a probabilistic baseline (see Appendix~\ref{app:particle_filter}). The particle filter baseline is constructed using the same individually-identified system matrices, noise bounds, and decision tree partitions as the set-valued estimator, ensuring that differences in predictive performance are attributable to the state estimation mechanism rather than the model identification procedure. 

To predict reliance using the set-valued state estimate, we first compute the $N_p$-step-ahead reachable set $\mathcal X(k \mid k{-}N_p)$ recursively using Eqn.~\ref{eq:one_step_ahead}, starting from $\mathcal X(k{-}N_p\mid k{-}N_p)$. Next, we use the law of total probability to re-write $\hat P(q(k)=1\mid k{-}N_p)$ as
\(
     \sum_{i=1}^{\hat M} \left[ \hat P(q=1\mid x\in \hat{\mathcal S}_i)\cdot \hat P(x(k)\in \hat{\mathcal S}_i\mid k{-}N_p) \right] \enspace.
\)
The first term in each product is given by the decision tree. The set-valued estimator makes no distributional assumptions on $x(k)$; however, for the sake of predicting reliance, computing the second term requires a probability, which we estimate by assuming a uniform distribution over the set $\mathcal B(\mathcal X(k\mid k{-}N_p))$, giving
\begin{align*}
\hat P(x(k)\in \hat{\mathcal S}_i\mid k{-}N_p) = \frac{\text{vol}(\mathcal B(\mathcal X(k\mid k{-}N_p))\cap\hat{\mathcal S}_i)}{\text{vol}(\mathcal B(\mathcal X(k\mid k{-}N_p)))}\enspace.
\end{align*}

The predictive performance is evaluated by computing the Area Under the Receiver Operating Characteristic Curve (AUC-ROC) using the predicted probabilities and the true reliance labels $q(k)$. Table~\ref{tab:auc_roc} compares the AUC-ROC values for the set-valued state estimator against the baseline cases for several prediction horizon lengths: $N_p\in \{15,30, 45, 60\}$. The set-valued estimator consistently achieves the highest mean AUC-ROC across all prediction horizons $N_p$, with performance improving as $N_p$ increases (mean AUC of 0.74, 0.74, 0.80, and 0.82, respectively). At the shortest horizon $N_p = 15$, all three methods perform comparably (mean AUC of 0.67, 0.72, and 0.74), suggesting that short-horizon predictions are less sensitive to the quality of the state estimate. In other words, the dynamic model alone provides sufficient predictive power over a small number of steps. For larger $N_p$, however, the mean prediction accuracy of the set-valued state estimator differs from the open loop accuracy by more than $0.1$; this suggests that the recursive intersection of past reliance feasible sets $\mathcal{X}^R_{q(k-m)}$ provides increasingly informative constraints over longer horizons, effectively leveraging the history of reliance observations to improve the state estimate. In contrast, the open-loop baseline, which propagates the model forward from $k=N_T$ without any measurement-based updates, remains largely constant across all prediction horizons (mean AUC $\approx 0.68$), confirming that the performance improvements of the set-valued estimator are attributable to the measurement-based updates rather than the underlying dynamical model alone. The particle filter also improves with $N_p$ (mean AUC of 0.72 to 0.76), but is consistently outperformed by the set-valued estimator in mean AUC. The set-valued estimator also achieves a lower standard deviation at $N_p \in \{45, 60\}$ (0.20 and 0.21 vs.\ 0.25 for the particle filter). The degraded performance of the particle filter at longer horizons suggests that errors due to ad hoc (Gaussian) characterization of the process and measurement noise distribution can compound over time, leading to larger prediction errors as $N_p$ grows. 

\section{Conclusion}\label{sec:conclusion}
In this work we presented the first set-based framework for estimating human cognitive states in a continuous, non-trial-based human-automation interaction setting. Unlike probabilistic approaches dominant in the human-automation interaction literature, the proposed framework treats process and measurement uncertainties as unknown but bounded, avoiding the need for large structured datasets or distributional assumptions on noise. Reachset conformance is enforced during system identification to systematically estimate noise bounds, and binary reliance observations and intermittent, quantized self-reports are fused online to produce set-valued estimates of the cognitive states influencing reliance. An online model update mechanism further enables the estimator to adapt when the initially identified model fails to capture behaviors observed during testing.

The framework was evaluated in an in-person driving simulator experiment, where trust, perceived risk, and workload were estimated for individual participants during continuous interaction with SAE Level 3 automation. The set-valued estimator achieved at least 75\% consistency for $\approx 80\%$ participants, and $N_p$-step-ahead reliance predictions outperformed both an open-loop baseline and a particle filter across all prediction horizons $N_p \in \{15, 30, 45, 60\}$. The performance advantage over the particle filter increased with $N_p$, suggesting that the recursive intersection of past reliance feasible sets provides increasingly informative constraints over longer horizons, without requiring distributional assumptions that can compound errors over time.

Several directions remain open for future work. First, the estimator should be deployed within a closed-loop trust calibration or adaptive automation policy to directly assess its utility for meeting control objectives. Second, the current framework estimates only the subset of cognitive state dimensions predictive of reliance, since reliance is the only continuously available measurement. However, incorporating additional measurements sensitive to all state dimensions would enable full-state estimation. Finally, extending the framework to leverage population-level priors for model parameters could improve identification for individuals whose training data is limited.

\section*{Acknowledgments}
We thank Xipeng Wang for developing the driving simulator software and Tyler Hsieh for his assistance with experiment ideation.

\appendices
\section{Simplifying the Reachset-Conformance Constraints}\label{app:reachset_constraints}
In the optimization problem described by Eqn.~\ref{eq:sys_ID_bounds} in the main text, the diagonal system matrix $\hat{A}$ and axis-aligned intervals $\mathcal{W}$ and $\mathcal{V}$ eliminate the need for a toolbox for reachability analysis (which can be computationally expensive). Instead, we exploit the interval arithmetic to analytically compute forward reachable sets and evaluate the set containment constraints. Since $\hat{A}$ is diagonal and $\mathcal{W}$, $\mathcal{V}$ are axis-aligned intervals, the reachable set at each time step remains an axis-aligned interval, whose center $\mu$ and half-width $P$ can be propagated analytically. At each time step $k$, the reachable interval center and half-width are given by
\begin{align*}
    \mu(k) &= \hat{A}\mu(k{-}1) + \hat{B}d(k{-}1) + \hat{C}, \\
    P(k)   &= \hat{A}P(k{-}1) + \epsilon_w\enspace.
\end{align*}
The interval is then clipped to the state domain $\mathcal{S} = [0,1]^3$ via
\begin{align*}
    I_{\min} &= \max\!\left(0,\;\mu(k) - P(k)\right)\enspace, \\
    I_{\max} &= \min\!\left(1,\;\mu(k) + P(k)\right)\enspace, \\
    \mu(k)   &= \frac{I_{\min} + I_{\max}}{2}, \qquad
    P(k)      = \frac{I_{\max} - I_{\min}}{2}\enspace.
\end{align*}
Whenever a self-report $y(k)$ is available, the measurement interval defined by $\mathcal X_{y(k)}^{SR}$ is likewise clipped to $[0,1]^3$. 
Let $\mu_y(k)$ and $P_y(k)$ denote the center and half-width of the clipped measurement interval, respectively. Enforcing containment of the self-report
within the reachable interval yields the inequality constraints
\begin{align*}
    \bigl(\mu_y(k) + P_y(k)\bigr) - \bigl(\mu(k) + P(k)\bigr) &\leq 0\enspace, \\
    \bigl(\mu(k)   - P(k)  \bigr) - \bigl(\mu_y(k) - P_y(k)\bigr) &\leq 0\enspace.
\end{align*}
Finally, once the containment constraints have been evaluated, the reachable interval 
is reset to the self-report measurement, such that
\begin{align*}
    \mu(k) = \mu_y(k), \qquad P(k) = P_y(k)\enspace.
\end{align*}

\section{Probabilistic Baseline: Particle Filter}\label{app:particle_filter}
We use the estimated system matrices, noise bounds, and reliance regions described in 
the main text to construct a probabilistic baseline estimation scheme via a particle filter. The particle 
filter maintains $N_{PF} = 1000$ particles $\{x^{(i)}\}_{i=1}^{N_{PF}}$, each representing 
a hypothesis of the projected cognitive state vector $\Pi x$, consisting only of the 
state dimensions in $\mathcal{J}$ identified as predictive of reliance by the decision tree.

\paragraph{Initialization}
Particles are initialized around the first available self-report $y(N_T)$, perturbed 
by noise drawn from a Gaussian distribution with standard deviation $\epsilon_v$, such that for $i = 1, \ldots, N_{PF}$,
\begin{align*}
    x^{(i)}(N_T\mid N_T) \sim \Pi y(N_T) + \mathcal{N}(0,\, \mathrm{diag}(\epsilon_v^2))\enspace,
\end{align*}
and clipped to the valid domain $\Pi\mathcal{S}$. Each particle is assigned a likelihood 
initialized uniformly, such that
\begin{align*}
    \lambda^{(i)}(N_T\mid N_T) = \frac{1}{N_{PF}}, \quad i = 1, \ldots, N_{PF}\enspace.
\end{align*}

\paragraph{Prediction}
At each time step $k$, particles are propagated through the identified system dynamics 
with additive process noise $w^{(i)}(k) \sim \mathcal{N}(0,\,\mathrm{diag}(\epsilon_w^2))$, 
such that
\begin{align*}
    x^{(i)}&(k\mid k{-}1) = \\
    &\Pi\left(\hat{A}x^{(i)}(k{-}1\mid k{-}1) + \hat{B}d(k{-}1) 
    + \hat{C} + w^{(i)}(k)\right)\enspace,
\end{align*}
and clipped to $\Pi\mathcal{S}$. The likelihoods are carried forward unchanged, 
i.e., $\lambda^{(i)}(k\mid k{-}1) = \lambda^{(i)}(k{-}1\mid k{-}1)$.

\paragraph{Reliance-Based Update}
The observed reliance label $q(k) \in \{0, 1\}$ is used to update the particle 
likelihoods via the decision tree. For each particle $x^{(i)}(k\mid k{-}1)$, the 
decision tree returns a probability $P(q(k) \mid x^{(i)}(k\mid k{-}1))$, and the 
likelihoods are updated according to
\begin{align*}
    \tilde{\lambda}^{(i)}(k\mid q(k)) = \lambda^{(i)}(k\mid k{-}1) 
    \cdot P\!\left(q(k) \mid x^{(i)}(k\mid k{-}1)\right)\enspace,
\end{align*}
and re-normalized to form a valid probability distribution over particles.

\paragraph{Self-Report-Based Update}
Whenever a self-report $y(k)$ is available, the likelihoods are further updated 
using a Gaussian likelihood with standard deviation $\epsilon_v$ per state dimension, 
such that
\begin{align*}
    \tilde{\lambda}^{(i)}(k\mid &q(k), y(k)) = \lambda^{(i)}(k\mid q(k)) \times 
    \\&\exp\!\left(-\frac{1}{2}\sum_{j}\frac{\left(x^{(i)}_j(k\mid k{-}1) - 
    y_j(k)\right)^2}{\epsilon_{v,j}^2}\right)\enspace,
\end{align*}
and re-normalized as
\begin{align*}
    \lambda^{(i)}(k\mid k) = \frac{\tilde{\lambda}^{(i)}(k\mid q(k), y(k))}
    {\sum_{j=1}^{N_{PF}}\tilde{\lambda}^{(j)}(k\mid q(k), y(k))}\enspace.
\end{align*}
When no self-report is available, we set 
$\lambda^{(i)}(k\mid k) = \lambda^{(i)}(k \mid q(k))$.

\paragraph{Resampling}
Over time, likelihood degeneracy can occur, whereby most particles are assigned 
negligible likelihood and only a few dominate the distribution. To detect this, we 
compute the effective sample size
\begin{align*}
    N_{\mathrm{eff}}(k) = \frac{1}{\sum_{i=1}^{N_{PF}} 
    \left(\lambda^{(i)}(k\mid k)\right)^2}\enspace,
\end{align*}
which ranges from $1$ (full degeneracy) to $N_{PF}$ (uniform likelihoods). Whenever 
$N_{\mathrm{eff}}(k) < N_{PF}/2$, systematic resampling is applied~\cite{Kitagawa01031996}. A single random 
offset $u_{\mathrm{off}} \sim \mathcal{U}(0, 1/N_{PF})$ is drawn, and $N_{PF}$ evenly 
spaced thresholds $\{u_{\mathrm{off}} + (j-1)/N_{PF}\}_{j=1}^{N_{PF}}$ are swept 
across the cumulative likelihood distribution to select particles, replicating 
high-likelihood particles and discarding low-likelihood ones. After resampling, 
likelihoods are reset uniformly to $\lambda^{(i)}(k\mid k) = 1/N_{PF}$.

\paragraph{$N_p$-Step-Ahead Reliance Prediction}
At each time step $k>N_p$, the current particle population is propagated $N_p$ 
steps forward using only the prediction step (no updates), yielding
\begin{align*}
    &x^{(i)}(k\mid k{-}N_p) = \Pi(\hat{A}^{N_p}x^{(i)}(k{-}N_p\mid k{-}N_p) 
    \\&+ \sum_{j=0}^{N_p-1}\hat{A}^{j}(\hat{B}d(k+N_p-1-j) 
    + \hat{C}) + w^{(i)})\enspace.
\end{align*}
The decision tree is then queried on the propagated particles to produce a 
likelihood-weighted probability of reliance $\hat{P}\!\left(q(k+N_{\mathrm{rel}})=1\right)$, computed as
\begin{align*}
     \sum_{i=1}^{N_{PF}} 
    \lambda^{(i)}(k\mid k)\cdot \hat P\!\left(q=1 \mid x^{(i)}(k+N_{\mathrm{rel}}\mid k)\right)\enspace.
\end{align*}

\bibliography{references}
\bibliographystyle{IEEEtran}


\section*{Biography Section}

\begin{IEEEbiography}[{\includegraphics[width=1in,height=1.25in,clip,keepaspectratio]{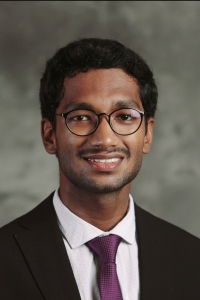}}]{Sibibalan Jeevanandam}
 is a fourth year doctoral student at Purdue University. He earned his Bachelors of Technology in Mechanical Engineering from the Indian Institute of Technology Tirupati (India) in 2022. His research interests are control-oriented modeling and estimation, set-based methods, and human-machine interaction. He received the Winkelman Fellowship from the School of Mechanical Engineering in 2022, and the Summer Undergraduate Research Fellowship Graduate Mentor Award in 2023, both at Purdue.
\end{IEEEbiography}

\begin{IEEEbiography}[{\includegraphics[width=1in,height=1.25in,clip,keepaspectratio]{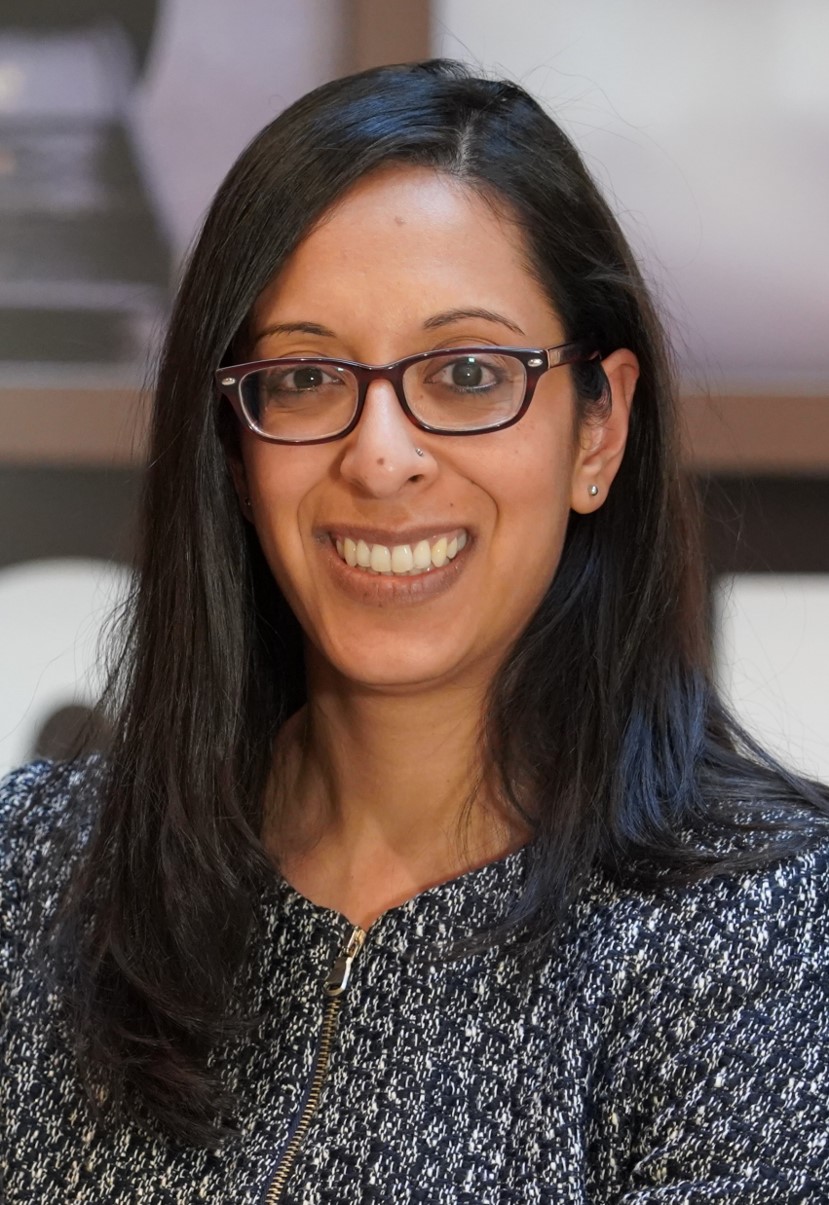}}]{Neera Jain}
 is an Associate Professor in the School of Mechanical Engineering at Purdue University, West Lafayette, IN, USA. Dr. Jain earned her M.S. and Ph.D. degrees in mechanical engineering from the University of Illinois at Urbana-Champaign in 2009 and 2013, respectively. She earned her S.B. from the Massachusetts Institute of Technology in 2006. Her research interests include dynamic modeling, state estimation, and optimal control with applications primarily to human-autonomy teaming and thermal management in vehicles and buildings. She has held visiting research positions at Mitsubishi Electric Research Laboratories, Air Force Research Laboratory (AFRL) at Wright-Patterson Air Force Base (AFB) (Aerospace Systems Directorate) and AFRL at Kirtland AFB (Space Vehicles Directorate). She is the recipient of the 2022 NSF CAREER Award, a National Research Council (NRC) Research Associateship Award, the 2023 ASME Dynamic Systems and Control Division (DSCD) Outstanding Young Investigator Award, and the 2024 ASME DSCD Rudolf Kalman Best Paper Award.
\end{IEEEbiography}

\vfill

\end{document}